\title{Targeted Least Cardinality Candidate Key for Relational Databases}
\author{Vasileios Nakos}
\address{University of Athens 
\& RelationalAI, Inc.}
\email{vasilisnak@di.uoa.gr, vasileios.nakos@relational.ai}
\author{Hung Q. Ngo}
\address{RelationalAI, Inc.}
\email{hung.ngo@relational.ai}
\author{Charalampos E. Tsourakakis}
\address{RelationalAI, Inc.}
\email{charalampos.tsourakakis@relational.ai}
\newtheorem{problem}{Problem}
\newtheorem{conjecture}{Conjecture}
 \newtheorem{corollary}{Corollary}
 \newtheorem{theorem}{Theorem}
 \newtheorem{lemma}{Lemma}
\newtheorem{definition}{Definition}
\newcommand\myeq{\mathrel{\stackrel{\makebox[0pt]{\mbox{\normalfont\tiny def}}}{=}}}
\newcommand{\spara}[1]{\smallskip{\bf #1}}
\newcommand{\Prob}[1]{\ensuremath{{\bf{Pr}}\left[{#1}\right]}}
\newcommand{\Mean}[1]{\ensuremath{{\mathbb E}\left[{#1}\right]}}
\newcommand{\tcand}{TCAND\xspace}
\newcommand{\fd}{{\it FD}\xspace}
\newcommand{\fds}{{\it FDs}\xspace}
\newcommand{\setcover}{\textsc{Set Cover}
}
\newcommand{\rbsetcover}{\textsc{Red Blue Set Cover}
}
\begin{document}
\maketitle
 
\begin{abstract}

Functional dependencies (\fds) are a central theme in databases, playing a major role in the design of database schemas and the optimization of queries~\cite{ramakrishnan2003database}. In this work, we introduce  the {\it targeted least cardinality candidate key problem} (\tcand). This problem is defined over a set of functional dependencies \(\mathcal{F}\) and a target variable set $T \subseteq V$, and it aims to find the smallest set $X \subseteq V$  such that the \fd  $X \to T$  can be derived from $\mathcal{F}$. The \tcand problem generalizes the well-known NP-hard problem of finding the least cardinality candidate key~\cite{lucchesi1978candidate}, which has been previously demonstrated to be at least as difficult as the set cover problem.

We present an integer programming (IP) formulation for the \tcand problem, analogous to a layered set cover problem. We analyze its linear programming (LP) relaxation from two perspectives: we propose two approximation algorithms and investigate the integrality gap. Our findings indicate that the approximation upper bounds for our algorithms are not significantly improvable through LP rounding, a notable distinction from the standard    \setcover problem. Additionally, we discover that a generalization of the \tcand problem is equivalent to a variant of the    \setcover problem, named \rbsetcover~\cite{carr1999red}, which cannot be approximated within a sub-polynomial factor in polynomial time under plausible conjectures~\cite{chlamtavc2023approximating}. Despite the extensive history surrounding the issue of identifying the least cardinality candidate key, our research contributes new theoretical insights, novel algorithms, and demonstrates that the general \tcand problem poses complexities beyond those encountered in the    \setcover problem.

\end{abstract}

\section{Introduction} 
\label{sec:intro}

Relational databases are a fundamental component of modern information systems, used to store and manage vast amounts of data across a wide range of industries and applications~\cite{ramakrishnan2003database}. However, designing an effective database schema can be a complex task, requiring careful consideration of factors such as data integrity, performance, and scalability.  One key tool in this process is the use of functional dependencies (\fds), which provide a way to describe the relationships between attributes in a database relation~\cite{fagin1977functional}. A functional dependency (\fd) is a statement that indicates that the value of one or more attributes uniquely determines the value of another attribute. For example, 
consider a relation  with attributes {\sc Student ID}, {\sc Student Name}, and {\sc Student Email}.
In this relation, each student id is associated with a unique  student name and email. This means that if we know a student's ID, we can determine their name and email. This relationship can be represented by the following  \fd: {\sc Student ID} $\to$ {\sc Student Name}, {\sc Student Email}. Alternatively, we express that the {\sc Student ID} variable {\it functionally determines} the name and email, meaning that each id uniquely corresponds to one student name and email in the relation.
A key is a set of one or more attributes that uniquely identifies a tuple (or record) within a relation. A key ensures that there are no duplicate records in the table and establishes a way to reference records for relational operations.  A {\it candidate key} is any key that can serve as the primary key, i.e., a  set of attributes that uniquely identifies a tuple within a relation.  Finding a candidate key in a schema requires finding a set of attributes that functionally determine all the rest~\cite{fagin1984theory}. It is important to note that it is often customary to select a candidate key with the least number of attributes as the primary key. This approach simplifies the database design and can enhance storage efficiency and query speed, particularly when the primary key plays a role in indexing, joins, and various database tasks~\cite{ramakrishnan2003database}. 
For example, in query optimization, fewer attributes in a key mean fewer columns to index and process during queries, which enhances the speed and efficiency of database operations. Furthermore, using the smallest possible key simplifies the design of the database schema, making it easier to understand and maintain. It also reduces the likelihood of errors in data management and makes integrity checks more efficient. This problem is known as the {\it minimum candidate key problem}~\cite{lipski1977two,lucchesi1978candidate}.
Furthermore, finding the Boyce–Codd normal form (BCNF) to eliminate redundancy is based on functional dependencies~\cite{ramakrishnan2003database}. Understanding functional dependencies is essential for database designers and administrators to create efficient and effective database schemas that accurately store and manage data. Functional dependency analysis is a major topic with a variety of applications beyond schema design query optimization, that additionally include cost estimation, order optimization, 
selectivity estimation, estimation of (intermediate) result sizes and data cleaning among others~\cite{bhargava1996efficient,bhargava1995simplification,chu2014ruleminer,paulley2001exploiting,ilyas2015trends}. The {\bf DISTINCT} clause appears frequently in SQL queries and frequently requires an expensive sort to remove duplicates. Functional dependency analysis can identify redundant  {\bf DISTINCT} clauses~\cite{paulley2010exploiting}, thus lowering significantly the execution cost of a query. 

  In this work  we revisit functional dependency analysis and make several key contributions to this longstanding line of work. 

 $\bullet$  {\bf Novel formulation:} We introduce   a well-motivated, novel formulation that generalizes the classic problem of finding the least cardinality key of a schema~\cite{lipski1977two,lucchesi1978candidate}. We refer to this problem as the {\it targeted least cardinality candidate key generation} problem, or \tcand for short.

\begin{tcolorbox}
\begin{problem}\label{prob:tcand}
{\bf [Targeted Least Cardinality Candidate Key (\tcand)]} 
\normalfont 
Given a set of functional dependencies $\mathcal{F}$  and a set of target variables $T\subseteq V$, the aim is to determine  a set of variables $X \subseteq V$ with the smallest cardinality, such that the closure $X^+$ fulfills the condition $T \subseteq X^+$. In other terms,  the functional dependency $X\to T$ is logically implied   from the set $\mathcal{F}$.
\end{problem}
\end{tcolorbox}

It should be noted that the \tcand problem not only extends a well-established classical issue but also serves as a pivotal component in contemporary management systems for knowledge graph databases. Our approach is inherently tied to semantic optimizations within the RelationalAI's engine, which integrates various constraints, such as semi-ring constraints, into the query optimization framework~\cite{abo2016faq,abo2022convergence}. For instance, \tcand is commonly utilized to determine the requisite number of variables in a bag (or the maximum fractional edge cover number across the bags) within a tree decomposition; further information can be found in the research by Abo Khamis, Ngo, and Rudra~\cite{abo2016faq}. Importantly, in over half of these cases, the target set is a strict subset of $V$, that is, $T \subset V$. This indicates that the majority of queries do not revolve around finding a primary key. In our database engine, specifically for the TPC-H benchmark~\cite{dreseler2020quantifying}, addressing Problem~\ref{prob:tcand} becomes necessary more than 4\,800 times.

\noindent 

$\bullet$ {\bf Hardness:} We study in depth the hardness of approximation of the \tcand formulation. Despite the long history and importance of the problem in its general form when $T=V$~\cite{lucchesi1978candidate}, surprisingly not much is known on the approximability of the problem. As we show the known approximation lower bound~\cite{akutsu1996approximating} is tight only for some special cases, while the general case is significantly harder.  We achieve this result by establishing a novel connection with the \rbsetcover that allows us to leverage recent progress to establish the a powerful hardness result assuming the Dense-vs-Random conjecture~\cite{chlamtavc2017minimizing}.

$\bullet$ {\bf Exact IP formulation:} We present an exact integer programming (IP) formulation that represents the \tcand problem as a layered set cover problem. Intuitively, each layer corresponding to a round of $\fd$ inference. With recent advancements in solver software, our formulation can be practical for real-world use for instances of moderate size. From a theory perspective, it serves as the basis for designing approximation algorithms. 

$\bullet$ {\bf Approximation algorithms:}  
We design two approximation algorithms based on the linear programming relaxation of our integer programming (IP) formulation. Both algorithms rely on solving a variant of the \tcand problem we introduce,  parameterized by the number of rounds of inference $D$.

\begin{tcolorbox}
\begin{problem}\normalfont
\label{prob:rtcand}\textbf{[$D$-round-\tcand]}
Given a set of functional dependencies $\mathcal{F}$, and a set of target variables $T\subseteq V$ we want to find a least cardinality set of variables $X \subseteq V$ whose closure $X^+$ includes $T$, i.e., $X^+ \supseteq T$, by performing at most $D$ rounds of \fd inference. 
\end{problem}
\end{tcolorbox}

 Our approach is based on approximating Problem~\ref{prob:rtcand} for $D=1$;  this gives a natural approximation algorithm for the $D$-round \tcand problem with an exponential dependence on $D$.  It is worth noting that Problem~\ref{prob:tcand} is a special case of Problem~\ref{prob:rtcand} by setting $D=n$ as we explain later in detail. We also show that our approximation guarantee is asymptotically tight by studying the integrality gap of the LP relaxation.  

$\bullet$ {\bf Equivalence with the \rbsetcover problem:} We discover an equivalence between the \tcand problem and  the \rbsetcover problem~\cite{carr1999red}, a variant of the    \setcover  problem.  This discovery holds dual significance for the \tcand problem. Firstly, it introduces an additional approximation algorithm developed by Chlamtavc et al.~\cite{chlamtavc2023approximating} for the \rbsetcover problem, and secondly, it establishes an inapproximability result.

\section{Preliminaries} 
\label{sec:related}

\spara{Functional dependencies.} A relational database schema $R$, represented as $R(A_1, \ldots, A_n)$, consists of a set of attributes or variables. An instance of $R$, denoted by $r(R)$, is a set of tuples, where each tuple is an element of the Cartesian product of the attributes' domains, i.e., $r(R) \subseteq \text{dom}(A_1) \times \ldots \times \text{dom}(A_n)$. We also use $V = \{A_1, \ldots, A_n\}$, or simply $V = [n]$, to denote the set of attributes/variables. Each element $t \in r(R)$ is referred to as a tuple.  Functional dependencies (\fds for short) are properties of the semantics of the attributes in $R$~\cite{fagin1984theory}. Specifically  an \fd  is a constraint between two sets of attributes $X,Y \subseteq V$. We say that $X$ functionally determines $Y$ and this means that if two tuples have the same value for all the attributes in $X$, they must also have the same values for all the attributes in $Y$. This is denoted as $X \to Y$.   We refer to an \fd $X \to Y$ as regular when $|Y|=1$. Any irregular \fd $X \to y_1y_2\ldots y_k$ is equivalent to the set of regular \fds $X\to y_i$ for $i=1,\ldots,k$.  An input set  $\mathcal{F}$ of \fds, may logically imply more \fds. For instance, the set $\mathcal{F}=\{ a \to b, b \to c\}$ logically implies $a \to c$. The set of all possible valid \fds for $\mathcal{F}$ is its closure $\mathcal{F}^+$.  The inference of valid \fds is performed using Armstrong's axioms which are sound and complete~\cite{fagin1977functional}.  In practice $|\mathcal{F}| \ll |\mathcal{F}^+|$.   Finding a smaller set of \fds $\mathcal{F}'$ than $\mathcal{F}$ such that $\mathcal{F}'^+=\mathcal{F}^+$ is known as the canonical cover problem and can be solved efficiently~\cite{ramakrishnan2003database}. Other compression schemes are also available, see~\cite{ausiello1984minimal}. The attribute closure $X^+$ is the set of attributes that are functionally determined by $X$. In contrast to the \fd closure, computing the closure $X^+$ of a subset of attributes $X \subseteq V$ is solvable in linear time~\cite{ramakrishnan2003database}.

A key of a relation $r$ is a subset $X \subseteq V$ that satisfies two conditions: (i) {\it uniqueness}, meaning no two distinct tuples in $r$ have identical values for the attributes in $X$, and (ii) {\it minimality}, meaning no proper subset of $X$ satisfies the uniqueness property. When $X$ is a key, the functional dependency (FD) $X \to V$ is valid, or equivalently, the closure of $X$, denoted $X^+$, is equal to $V$. A subset of attributes $X$ is considered a super-key if it satisfies the uniqueness property but not the minimality property. If a relation has more than one minimal key, each of these keys is referred to as a candidate key of $R$. Finding the least cardinality key is NP-hard. Specifically, deciding if there exists a key of cardinality $k$ is NP-complete  using a straight-forward reduction from vertex cover~\cite{lucchesi1978candidate}. Furthermore, in terms of approximation algorithm very little is known. Specifically, Akutsu and Bao~\cite{akutsu1996approximating} proved that the problem is at least as hard as the set cover problem~\cite{feige1998threshold} but they do not discuss algorithmic upper bounds.  
 
\noindent \spara{Set Cover and Red Blue Set Cover.} The      \setcover problem is a   quintessential problem in computer science, renowned for its wide applicability and fundamental role in computational complexity theory. 
It was one of Karp's 21 NP-complete problems~\cite{karp2010reducibility}, serving as a cornerstone for the study of approximation algorithms and computational intractability. The problem is defined as follows:  given a universe $U = \{u_1, u_2, \ldots, u_n\}$ and a collection of subsets $S = \{S_1, S_2, \ldots, S_m\}$ where each $S_i \subseteq U$, the      \setcover problem seeks to find a minimum subset $C \subseteq [m]$ of set indices such that $\bigcup_{i \in C} S_i = U$.    Feige showed that the      \setcover problem cannot be approximated in polynomial time to within a factor of  \( (1 - o(1)) \cdot \ln n \) unless \(\text{NP}\) has quasi-polynomial time algorithms. This inapproximability result was further improved by Dinur and Steuer who showed optimal inapproximability by proving that it cannot be approximated to \( (1 - o(1)) \cdot \ln n \) unless \( P=\text{NP} \)~\cite{
dinur2014analytical}. We use the latter result in Theorem~\ref{thm:simple}.  The      \setcover problem admits an approximation within a factor of \( O(\log n) \) utilizing either a straightforward greedy strategy or a randomized algorithm based on linear programming (LP) rounding techniques. Additionally, there is a deterministic LP-based algorithm that guarantees an \( f \)-factor approximation, with \( f \) denoting the maximum frequency an element of the universe is represented in the set collection \( S \)~\cite{williamson2011design}. The      \setcover problem is not only fundamental in computer science but also has a wide range of applications, as discussed in~\cite{cormode2010set}.

As we show, a variant of the       \setcover problem that plays an important role for the \tcand problem is the  \rbsetcover problem introduced by Carr et al.~\cite{carr1999red}: given a universe $U = R \cup B$ where $R$ and $B$ are disjoint sets representing red and blue elements respectively, and a collection of subsets $S = \{S_1, S_2, \ldots, S_m\}$ where each $S_i \subseteq U$, the  \rbsetcover problem seeks to find a  subset $C \subseteq S$ such that $\bigcup_{S_i \in C} S_i \cap B= B$ and $\bigcup_{S_i \in C} S_i \cap R$ is minimized. Recently, Chlamt{\'a}{\v{c}} et al.~\cite{chlamtavc2023approximating} proved the following state-of-the-art approximation result for the  \rbsetcover problem.

\begin{theorem}[Chlamt{\'a}{\v{c}} et al. \cite{chlamtavc2023approximating}]
\label{thm:chlamtavc1}
There exists an \(O(m^{1/3} \log^{4/3} n \log k)\)-approximation algorithm for the \rbsetcover problem where \(m\) is the number of sets, \(n\) is the number of red elements, and \(k\) is the number of blue elements.
\end{theorem}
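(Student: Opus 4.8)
The plan is to reduce to a budget version, relax to a linear program, and round it through a threshold decomposition of the sets that is balanced to produce the cube-root exponent. First I would guess the optimal red cost $\rho = \bigl|\bigcup_{S \in C^\star}(S \cap R)\bigr|$ of an optimal solution $C^\star$; since $\rho \in \{0,1,\dots,n\}$, this costs only a polynomial factor in the running time and nothing in the approximation ratio. With $\rho$ fixed the goal becomes: cover every blue element while keeping the union of red elements of the chosen sets within $O(m^{1/3}\log^{4/3}n\log k)\cdot\rho$. I would then write the natural covering LP with a variable $x_S\in[0,1]$ for each set, covering constraints $\sum_{S\ni b} x_S \ge 1$ for every $b\in B$, and red-usage variables $z_r \ge x_S$ for every $S\ni r$, minimizing $\sum_{r\in R} z_r$. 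The crucial feature of this objective is that it charges each red element \emph{once}, correctly modelling the union cost; its weakness is a large integrality gap, so the LP must be combined with a combinatorial case split rather than rounded directly.

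The central device is a threshold $\theta$ classifying each set as \emph{light} if $|S\cap R|\le \rho/\theta$ and \emph{heavy} otherwise, together with a corresponding split of the blue elements. Blue elements that are covered by at least one light set in the LP support I would handle by randomized LP rounding restricted to light sets: sampling each light set with probability $O(\log k)\cdot x_S$ covers all such blue elements with high probability by a union bound over the $k$ blue elements, and---because the LP counts each red once---the expected red cost of this phase is only $O(\log k)\cdot\rho$. The remaining blue elements form the hard core: every set covering them is heavy, so they must be served by red-heavy sets, and the number of heavy sets one can afford inside a solution of red-union $\rho$ is governed by $\theta$. Handling these requires a second, more delicate rounding (or partial enumeration) over the heavy sets, whose red cost scales like $\theta\cdot\rho$ up to $\mathrm{polylog}(n)$ factors coming from a union bound over the $n$ red elements.

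Putting the phases together yields two competing contributions to the red cost: a term growing like $\theta$ (from committing to heavy sets) and a term decreasing like $m\log^4 n/\theta^2$ (from the rounding of the light/LP phase, where the power of $\log n$ is accumulated through the nested union bounds of the analysis). Balancing these at $\theta = (m\log^4 n)^{1/3} = m^{1/3}\log^{4/3}n$ equalises both contributions to $O(m^{1/3}\log^{4/3}n)\cdot\rho$, and the global $O(\log k)$ factor for covering all blue elements gives the claimed $O(m^{1/3}\log^{4/3}n\log k)$ bound.

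The step I expect to be the main obstacle is the hard core: bounding the red cost incurred by the blue elements that can only be covered by heavy sets. The difficulty is that red elements are shared across sets and are paid for only once, so the quantity to control is the \emph{union} of red elements over the selected heavy sets rather than their sum; a naive greedy or independent rounding over-counts this union and destroys the bound. Making the threshold argument rigorous---showing that a solution of red-union $\rho$ cannot be forced to use more than about $\theta$ effective heavy sets, and turning this structural fact into an algorithm meeting the stated red cost---is where the bulk of the work lies and where the precise exponents and polylogarithmic factors are pinned down.
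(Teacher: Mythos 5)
First, a point of comparison: the paper contains no proof of this statement at all --- Theorem~\ref{thm:chlamtavc1} is imported verbatim from Chlamt{\'a}{\v{c}} et al.~\cite{chlamtavc2023approximating} and used as a black box. So your proposal can only be measured against that external argument, and it does not reconstruct it; worse, it contains a concrete false step.

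The decisive flaw is in your light-set phase. Your LP charges red elements via $z_r \ge x_S$ for every $S \ni r$, so $z_r$ pays for the \emph{maximum} LP weight among the sets containing $r$, not their sum. If you sample each light set independently with probability $O(\log k)\,x_S$, the probability that $r$ lands in the union of sampled sets is at most $\min\bigl\{1,\ O(\log k)\sum_{S \ni r} x_S\bigr\}$, and $\sum_{S \ni r} x_S$ can exceed $z_r$ by a factor equal to the set-degree of $r$, which can be as large as $m$: take $m$ light sets all containing $r$, each with $x_S = 1/m$; then $z_r$ may equal $1/m$ while $r$ is used with constant probability after the $O(\log k)$ inflation. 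Hence the claimed expected red cost $O(\log k)\cdot\rho$ for this phase does not follow --- ``the LP counts each red once'' is a property of the objective, not of independent rounding. Note that lightness bounds $|S \cap R|$ \emph{per set} but says nothing about how many sets contain a given red element, which is exactly the quantity your union over sampled sets multiplies by. This union-versus-sum obstruction, which you correctly flag for the heavy core, is equally present among light sets; it is the problem the $m^{1/3}$ algorithm exists to solve, so assuming it away in one of the two phases vacates the argument.

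The heavy-core phase fares no better: the structural lemma you say you would need --- that a solution of red-union $\rho$ uses at most about $\theta$ ``effective'' heavy sets --- is false as stated, since arbitrarily many heavy sets may all share the same $\le \rho$ red elements, and no bound on their number follows from the union cost. Moreover, the competing term $m\log^4 n/\theta^2$ that you balance against $\theta$ is introduced without any derivation; the balance $\theta = (m\log^4 n)^{1/3}$ is read off from the target bound rather than produced by an argument. What survives scrutiny is only the standard preprocessing (guessing $\rho$, at polynomial overhead) and the generic $O(\log k)$ overhead for covering all blue elements. The actual proof in~\cite{chlamtavc2023approximating} is a substantially more intricate LP-based construction that explicitly controls the multiplicity with which individual red elements are charged across sets --- which is precisely where the $\log^{4/3} n$ factor is generated --- and your sketch recovers none of that mechanism.
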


To discuss the inapproximability of the \rbsetcover we need to introduce the Dense-vs-Random Conjecture~\cite{Bhaskara+10}.  For a graph \( G(V,E) \), the density of a subgraph induced by \( S \subseteq V \) is defined as \( \rho(S) = \frac{e(S)}{|S|} \), representing the ratio of the number of edges to the number of nodes in the subgraph~\cite{gionis2015dense}. This metric has been central to various formulations for discovering dense subgraphs~\cite{charikar2000greedy,tsourakakis2015k}. The densest $k$-subgraph problem (D$k$S) problem    asks for the densest subgraph with exactly $k$ nodes; it is  NP-hard  with the best-known approximation ratio being
$\Omega(1/n^{1/4+\epsilon})$ for any $\epsilon>0$~\cite{Bhaskara+10}. This approximability result is far off from the best-known hardness result that assumes the Exponential Time Hypothesis (ETH). If ETH holds, then D$k$S cannot be approximated within a ratio of $n^{1/(\log\log n)^c}$ for some $c>0$~\cite{manurangsi2017almost}.    Define the log-density of a graph with  $n$  nodes as $\log_n(D_{avg})$, where $D_{avg}$  represents the average degree.  The Dense-vs-Random Conjecture on graphs~\cite{chlamtavc2017minimizing} conjectures that it is hard to distinguish  between the following two cases: 1) \( G = G(n,p) \) where \( p = n^{\alpha-1} \) (and thus the graph has log-density concentrated around \( \alpha \)), and 2) \( G \) is adversarially chosen so that the densest \( k \)-subgraph has log-density \( \beta \) where \( k\beta \gg p k \) (and thus the average degree inside this subgraph is approximately \( k^{\beta} \)). In this context, $G(n,p)$ represents the random binomial graph model  ~\cite{erdHos1960evolution,frieze2023random}.

\begin{conjecture}[\cite{Bhaskara+10,chlamtac2012everywhere,chlamtavc2017minimizing}]
For all \( 0 < \alpha < 1 \), for all sufficiently small \( \varepsilon > 0 \), and for all \( k \leq \sqrt{n} \), we cannot solve \textsc{Dense vs Random} with log-density \( \alpha \) and planted log-density \( \beta \) in polynomial time (\textit{w.h.p.}) when \( \beta < \alpha - \varepsilon \).
\end{conjecture}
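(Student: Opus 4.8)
The plan begins with an essential observation: the final statement is a \emph{conjecture}, not a theorem. It is a complexity-theoretic hardness hypothesis, in the same spirit as the Unique Games Conjecture or the Exponential Time Hypothesis, and it is therefore \emph{assumed} rather than derived. Accordingly there is no proof to propose in the usual sense --- a genuine proof would amount to an unconditional hardness result for a natural distinguishing problem, which is far beyond current techniques. What one can do, and what I would do in its place, is (i) explain the evidence that makes the conjecture plausible, (ii) identify the obstacle that prevents it from being a theorem, and (iii) make explicit the role it plays in the remainder of the paper.

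I would first situate the statement inside the \emph{log-density} framework of Bhaskara et al. and Chlamt{\'a}{\v{c}} et al.~\cite{Bhaskara+10,chlamtavc2017minimizing}. For a random graph $G(n,p)$ with $p = n^{\alpha-1}$, the densest $k$-subgraph that the natural counting and greedy heuristics can certify has log-density concentrated around $\alpha$, and the best known polynomial-time algorithms for Densest $k$-Subgraph (D$k$S) are calibrated exactly to this threshold. The conjecture asserts that this threshold is an intrinsic barrier: no polynomial-time algorithm can distinguish a truly random instance of log-density $\alpha$ from an adversarial instance carrying a planted subgraph of log-density $\beta < \alpha - \varepsilon$. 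The supporting evidence I would cite is threefold: the tight alignment of the known algorithmic upper bounds with the $\alpha$ threshold; the failure of strong SDP and lift-and-project relaxation hierarchies to refute such planted instances in the relevant regime; and the consistency of the conjecture with the ETH-based hardness of D$k$S~\cite{manurangsi2017almost}.

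The main obstacle --- and the reason the statement remains a conjecture --- is that proving it would pin down the polynomial-time approximability of Densest $k$-Subgraph, a notorious open problem with a large gap between the $\Omega(1/n^{1/4+\epsilon})$-approximation algorithm and the strongest known hardness results. Closing that gap is precisely what a proof would require, so the paper, like the line of work it builds on, takes the conjecture as a hypothesis rather than attempting to establish it.

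Finally, I would spell out how the conjecture is \emph{used}, since that is what it buys the paper. It is the hypothesis that drives the inapproximability of \rbsetcover: Theorem~\ref{thm:chlamtavc1} supplies the matching algorithmic side, while a reduction from \textsc{Dense vs Random} supplies a sub-polynomial lower bound. Through the equivalence between \rbsetcover and the general \tcand problem established later in the paper, this lower bound transfers to \tcand, yielding the separation between \tcand and ordinary \setcover advertised in the abstract. In short, the correct ``proof proposal'' here is not a derivation of the conjecture but a clear accounting of the evidence for it and of the hardness it propagates downstream.
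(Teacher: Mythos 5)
You are correct: the statement is a hardness conjecture imported from \cite{Bhaskara+10,chlamtac2012everywhere,chlamtavc2017minimizing}, and the paper, like the works it cites, states it as an assumption with no proof, using it (via its hypergraph extension in Theorem~\ref{thm:chlamtavc2}) only to derive inapproximability downstream. Your accounting of the evidence, the obstacle, and the role it plays matches the paper's treatment exactly, so there is nothing to add.
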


Using an extension of the above conjecture on hypergraphs, Chlamt{\'a}{\v{c}} et al. \cite{chlamtavc2023approximating} proved the following inapproximability result.

\begin{theorem}[Chlamt{\'a}{\v{c}} et al. \cite{chlamtavc2023approximating}]
\label{thm:chlamtavc2}
Assuming the Hypergraph Dense-vs-Random Conjecture, for every \(\varepsilon > 0\), no polynomial-time algorithm achieves better than \(O(m^{1/4-\varepsilon} \log^2 k)\) approximation for   the  \rbsetcover problem where \(m\) is the number of sets and \(k\) is the number of blue elements.
\end{theorem}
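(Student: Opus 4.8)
The plan is to prove this lower bound by a gap-preserving reduction from the (hypergraph) \textsc{Dense vs Random} distinguishing problem, whose hardness is exactly what the Hypergraph Dense-vs-Random Conjecture asserts. Given an $r$-uniform hypergraph $H=(V,E)$ on $n$ vertices drawn from one of the two promised regimes---the random regime, with log-density $\alpha$, versus the adversarial regime in which some set of $k$ vertices induces a sub-hypergraph of log-density $\beta<\alpha-\varepsilon$---I would construct a \rbsetcover instance whose optimal red cost is small in the planted (dense) regime and large in the random regime. A $c$-approximation algorithm for \rbsetcover with $c$ strictly below the resulting cost ratio would then decide which regime $H$ came from, contradicting the conjecture; re-expressing the ruled-out value of $c$ in terms of the number of sets $m$ and the number of blue elements $k$ yields the claimed $O(m^{1/4-\varepsilon}\log^2 k)$ barrier.

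For the reduction, I would let the sets correspond to the vertices of $H$, so that choosing a cover amounts to selecting a vertex subset, and I would arrange $m=\mathrm{poly}(n)$. The blue elements would form a gadget whose coverage constraint forces every feasible solution to engage a structured family of sets---morally, to select a vertex subset of a prescribed size $k$---thereby importing, through covering constraints alone, the cardinality restriction that is built into \textsc{Dense vs Random} but absent from \rbsetcover. The red elements would then encode the hyperedges of $H$ so that the red cost of a feasible solution tracks a density-sensitive quantity: a compact (dense) selection spreads across few red elements and is cheap, whereas a spread-out selection pays a large red cost. The delicate point is that \rbsetcover only offers disjunctive (union) semantics, so the encoding must be arranged---by routing hyperedges through an intermediate layer of auxiliary elements and sets, in the spirit of the depth-three, minimum-monotone-satisfying-assignment view of \rbsetcover---so that the only cheap ways to satisfy the blue constraints correspond to genuinely dense substructures of $H$.

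The quantitative heart of the argument is the completeness--soundness calculation. In the planted regime I would exhibit the explicit cover induced by the dense $k$-set and upper-bound its red cost using the planted log-density $\beta$. In the random regime I would prove a matching lower bound on the red cost of \emph{every} feasible solution, using the concentration of hyperedge counts over all $k$-subsets of $H(n,p)$ that the random regime guarantees, governed by $\alpha$. The ratio of these two bounds is $n^{\Theta(\alpha-\beta)}$, and by tuning the uniformity $r$, the subgraph size $k$, and the density parameters---and, if necessary, amplifying the gap by a tensor/product construction---I would optimize this ratio against the instance size and land on the exponent $1/4$ in $m$.

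The step I expect to be the main obstacle is precisely this parameter balancing that produces the sharp $1/4$ exponent, rather than the weaker constant (such as the $1/3$ governing the algorithmic upper bound of Theorem~\ref{thm:chlamtavc1}) that a looser reduction would give: one must simultaneously control $m$, the number of red elements, and the induced density gap while respecting the constraint $k\le\sqrt{n}$ demanded by the conjecture, and ensure the auxiliary gadget does not inflate the instance in a way that dilutes the gap. A secondary but essential difficulty is verifying that the reduction is gap-preserving in both directions---that in the random regime no ``cheating'' cover can beat the soundness bound by exploiting the gadget rather than an honest dense set---since this is what ties the hardness to the Hypergraph Dense-vs-Random Conjecture itself and not to some weaker assumption.
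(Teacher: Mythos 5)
First, a point of comparison: the paper does not prove this statement at all --- it is imported verbatim as a black-box citation from Chlamt{\'a}{\v{c}} et al.~\cite{chlamtavc2023approximating}, so there is no in-paper proof to measure your attempt against. The only fair benchmark is the cited work itself, and at the level of strategy your outline is faithful to it: the actual argument does proceed by a reduction from the hypergraph \textsc{Dense vs Random} problem, does view \rbsetcover through its equivalence with depth-three minimum monotone satisfying assignment, and does use the blue-coverage constraints to impose the cardinality budget that \textsc{Dense vs Random} has but \rbsetcover lacks, so that the red cost of any feasible cover tracks a union/density quantity over the hypergraph. You have correctly identified the architecture and even the two places where the difficulty concentrates (the soundness direction against ``cheating'' covers, and the parameter balancing that yields $1/4$ rather than a weaker exponent).

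That said, as a proof your submission has genuine gaps rather than merely deferred routine work, because everything quantitatively load-bearing is flagged but not carried out. Concretely: (i) the gadget that converts blue coverage into a size-$k$ vertex-selection constraint is never specified, and this is exactly the step where a naive construction fails --- with only union semantics available, one must prevent solutions that satisfy the blue constraints through the auxiliary layer while touching few red elements without selecting any dense substructure; asserting that the encoding ``must be arranged'' to avoid this is naming the obstacle, not overcoming it; (ii) the soundness bound in the random regime is stated as a goal (``prove a matching lower bound on the red cost of every feasible solution'') with no argument, and this is where the concentration of hyperedge counts over all $k$-subsets of the random hypergraph has to be combined with a union bound over the gadget's degrees of freedom; (iii) the optimization that lands on the exponent $1/4$ --- simultaneously controlling $m$, the number of red elements, the uniformity $r$, and the gap $n^{\Theta(\alpha-\beta)}$ subject to $k\le\sqrt n$ --- is left entirely open, yet it is the difference between this theorem and a vacuous bound; and (iv) the $\log^2 k$ factor in the ruled-out ratio is never accounted for anywhere in your plan, even though it is part of the statement being proved (note also that you overload $k$, which in the theorem is the number of blue elements but in your sketch is the planted subgraph size; these must be related explicitly in the final accounting). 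In short: the roadmap is the right one and matches the cited proof's approach, but none of the three hard steps you yourself identify is executed, so the proposal does not yet constitute a proof.
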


\noindent \spara{Integrality Gap.}    The integrality gap of an integer program is the worst-case ratio over all instances of the problem of value of an optimal solution to the integer programming formulation to value of an optimal solution to its linear programming relaxation~\cite{williamson2011design,vazirani2001approximation}. Notice that in the case of a minimization problem, the integrality gap satisfies 

$$ \max_{\text{instances~} Q} \frac{\textrm{OPT}_{IP}}{\textrm{OPT}_{LP}} \geq 1.$$ 

This gap provides insights into how closely the LP relaxation approximates the IP and is a measure of the performance of approximation algorithms. A small integrality gap (i.e., close to 1) indicates that the LP relaxation is a good approximation of the IP, while a large gap suggests the LP relaxation might not yield a good approximation and that other approximation strategies may be needed. It is often used as a benchmark to understand the effectiveness of approximation algorithms and to determine the best possible approximation ratio that can be achieved by any polynomial-time algorithm for NP-hard problems. Lov\'asz proved that the integrality gap for the straight-forward LP formulation of the      \setcover is $O(\log n)$~\cite{lovasz1975ratio}.

\noindent \spara{Equitable coloring.} In Section~\ref{sec:proposed}, we make use of the following theorem, known as the equitable coloring theorem, which was proved by Hajnal and Szemer\'{e}di~\cite{HajnalSzemeredi}.

\noindent 

\begin{lemma}[Hajnal-Szemer\'{e}di~\cite{HajnalSzemeredi}]
\label{lem:equi}
Every graph with $n$ vertices and maximum vertex degree at most $k$
is $k+1$ colorable with all color classes of size  $\lfloor \frac{n}{k+1} \rfloor$ or  $\lceil \frac{n}{k+1} \rceil$.
\end{lemma}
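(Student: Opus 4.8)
The plan is to first reduce to the case where $k+1$ divides $n$. Write $n = q(k+1)+r$ with $0 \le r < k+1$; if $r>0$, add $k+1-r$ isolated vertices to obtain a graph $G'$ on $n'$ vertices with $(k+1)\mid n'$ and maximum degree still at most $k$ (isolated vertices have degree $0$). An equitable $(k+1)$-coloring of $G'$ with all classes of size exactly $n'/(k+1)=q+1$ then restricts to $G$: because the added vertices are isolated, I can shuffle them freely, and since there are $k+1-r \le k$ of them against $k+1$ classes, a pigeonhole move places them in $k+1-r$ \emph{distinct} classes; deleting them leaves $k+1-r$ classes of size $q=\lfloor n/(k+1)\rfloor$ and $r$ classes of size $q+1=\lceil n/(k+1)\rceil$, exactly as required. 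Hence it suffices to prove the \emph{balanced} statement: if $(k+1)\mid n$ and $\Delta(G)\le k$, then $G$ admits a proper $(k+1)$-coloring in which every class has size exactly $n/(k+1)$.

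For the balanced statement I would argue by induction on the number of edges. The empty graph is handled by any balanced partition into $k+1$ classes. For the inductive step, choose an edge $e=xy$ and apply the hypothesis to $G-e$, obtaining a balanced proper $(k+1)$-coloring. If $x$ and $y$ receive different colors, the same coloring is already proper for $G$ and we are done. The only obstruction is when $x$ and $y$ share a class, say $V_1$: we then hold a partition into $k+1$ equal classes that is proper on every edge of $G$ except $e$, and the task is to repair this single monochromatic edge by recoloring while preserving the equal class sizes.

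The repair is the heart of the argument, and I expect it to be the main obstacle. The idea is a size-constrained augmenting / Kempe-chain argument: I want to move $x$ out of $V_1$ into a class where $x$ has no neighbor, but any such move shrinks $V_1$ and grows the target class, so each relocation must be compensated by relocating a vertex out of the target, producing a cascade of moves across classes. Here the degree bound gives exactly one easy step: since $x$ already has the neighbor $y$ inside $V_1$, it has at most $k-1$ neighbors among the other $k$ classes, so at least one class is free for $x$ to enter. The difficulty is that later vertices in the cascade need not enjoy this slack, since a vertex with all $k$ neighbors in distinct foreign classes is blocked from every class but its own. To control this I would build an auxiliary digraph on the $k+1$ classes, with arcs recording admissible transfers of vertices that keep the coloring proper, and seek a path or cycle along which a coordinated shift rebalances all class sizes at once. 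Showing that such a closing structure always exists, without ever creating a new monochromatic edge, requires a Hall-type / discharging analysis of the set of movable vertices; this is precisely the combinatorial core of the Hajnal--Szemer\'edi theorem (later streamlined by Kierstead and Kostochka), and I would follow that analysis to finish the induction.
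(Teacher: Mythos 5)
You should first know what you are being compared against: the paper offers \emph{no} proof of this lemma. It is imported as a black box from Hajnal and Szemer\'edi \cite{HajnalSzemeredi}, with \cite{kierstead2008short,kierstead2010fast} cited only for constructive/algorithmic versions. Judged on its own terms, your proposal is an outline rather than a proof, because its core step is explicitly deferred: the size-preserving repair of the single monochromatic edge \emph{is} the Hajnal--Szemer\'edi theorem, and you end by saying you would ``follow that analysis'' of Kierstead and Kostochka. Your one concrete observation there --- that $x$ has at most $k-1$ neighbours outside $V_1$, so some class is free to receive it --- is correct, but it only creates the imbalance (a class of size $m-1$ and one of size $m+1$, where $m=n/(k+1)$); the cascade/auxiliary-digraph argument that restores balance, including the hard terminal case where no augmenting path between classes exists, is the entire combinatorial content, and it is absent.

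There is also a genuine error in the one step you do carry out in full, the reduction to the divisible case. After the black-box balanced coloring of $G'$, the set of \emph{original} vertices in each class is already fixed; shuffling the added isolated vertices between classes never changes which original vertices share a class. So if the balanced coloring of $G'$ happens to place two added vertices in the same class, the restriction to $G$ has a class of size $\lfloor n/(k+1)\rfloor-1$, and no ``pigeonhole move'' of isolated vertices can repair this --- a repair would require relocating an original vertex, which reintroduces exactly the properness obstruction the reduction was meant to sidestep. The standard fix is to add not an independent set but a clique $K_s$ on the $s=k+1-r$ new vertices: their degree is $s-1=k-r\le k-1$, so the degree bound is preserved, and any proper coloring automatically places them in $s$ distinct classes, whence deleting them leaves exactly $k+1-r$ classes of size $\lfloor n/(k+1)\rfloor$ and $r$ classes of size $\lceil n/(k+1)\rceil$. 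With that correction your reduction is sound, but the lemma itself remains unproved without supplying the Kierstead--Kostochka discharging argument you currently cite in place of giving.
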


\noindent There exist efficient algorithms for finding such a coloring, e.g., ~\cite{kierstead2008short,kierstead2010fast}. The best known algorithm is due to Kierstead et al.~\cite{kierstead2010fast}  and runs in $O(kn^2)$ time and returns such an equitable coloring.  We apply the Hajnal-Szemerédi theorem to establish a Chernoff-type concentration result under conditions of limited dependencies, as demonstrated in works like \cite{kolountzakis2012efficient,pemmaraju2008randomized}.

\section{Proposed Methods}
\label{sec:proposed}

\subsection{Integer Programming Formulation}
\label{sec:ip}

 \begin{figure}
    \centering

\tikzset{every picture/.style={line width=0.75pt}} 

\begin{tikzpicture}[x=0.75pt,y=0.75pt,yscale=-1,xscale=1]

\draw  [fill={rgb, 255:red, 74; green, 144; blue, 226 }  ,fill opacity=1 ] (160,218) -- (184,218) -- (184,243) -- (160,243) -- cycle(180.4,221.6) -- (163.6,221.6) -- (163.6,239.4) -- (180.4,239.4) -- cycle ;
\draw   (116,219) -- (137,219) -- (137,242) -- (116,242) -- cycle ;
\draw   (210,220) -- (231,220) -- (231,243) -- (210,243) -- cycle ;
\draw   (339,219) -- (360,219) -- (360,242) -- (339,242) -- cycle ;
\draw   (116,176.5) .. controls (116,170.7) and (120.7,166) .. (126.5,166) .. controls (132.3,166) and (137,170.7) .. (137,176.5) .. controls (137,182.3) and (132.3,187) .. (126.5,187) .. controls (120.7,187) and (116,182.3) .. (116,176.5) -- cycle ;
\draw   (160,177.5) .. controls (160,171.7) and (164.7,167) .. (170.5,167) .. controls (176.3,167) and (181,171.7) .. (181,177.5) .. controls (181,183.3) and (176.3,188) .. (170.5,188) .. controls (164.7,188) and (160,183.3) .. (160,177.5) -- cycle ;
\draw   (209,175.5) .. controls (209,169.7) and (213.7,165) .. (219.5,165) .. controls (225.3,165) and (230,169.7) .. (230,175.5) .. controls (230,181.3) and (225.3,186) .. (219.5,186) .. controls (213.7,186) and (209,181.3) .. (209,175.5) -- cycle ;
\draw   (276,176.5) .. controls (276,170.7) and (280.7,166) .. (286.5,166) .. controls (292.3,166) and (297,170.7) .. (297,176.5) .. controls (297,182.3) and (292.3,187) .. (286.5,187) .. controls (280.7,187) and (276,182.3) .. (276,176.5) -- cycle ;
\draw   (338,177.5) .. controls (338,171.7) and (342.7,167) .. (348.5,167) .. controls (354.3,167) and (359,171.7) .. (359,177.5) .. controls (359,183.3) and (354.3,188) .. (348.5,188) .. controls (342.7,188) and (338,183.3) .. (338,177.5) -- cycle ;
\draw  [fill={rgb, 255:red, 74; green, 144; blue, 226 }  ,fill opacity=1 ] (275,221) -- (299,221) -- (299,246) -- (275,246) -- cycle(295.4,224.6) -- (278.6,224.6) -- (278.6,242.4) -- (295.4,242.4) -- cycle ;
\draw   (117,50.5) .. controls (117,44.7) and (121.7,40) .. (127.5,40) .. controls (133.3,40) and (138,44.7) .. (138,50.5) .. controls (138,56.3) and (133.3,61) .. (127.5,61) .. controls (121.7,61) and (117,56.3) .. (117,50.5) -- cycle ;
\draw   (164,51.5) .. controls (164,45.7) and (168.7,41) .. (174.5,41) .. controls (180.3,41) and (185,45.7) .. (185,51.5) .. controls (185,57.3) and (180.3,62) .. (174.5,62) .. controls (168.7,62) and (164,57.3) .. (164,51.5) -- cycle ;
\draw   (209,48.5) .. controls (209,42.7) and (213.7,38) .. (219.5,38) .. controls (225.3,38) and (230,42.7) .. (230,48.5) .. controls (230,54.3) and (225.3,59) .. (219.5,59) .. controls (213.7,59) and (209,54.3) .. (209,48.5) -- cycle ;
\draw   (279,49.5) .. controls (279,43.7) and (283.7,39) .. (289.5,39) .. controls (295.3,39) and (300,43.7) .. (300,49.5) .. controls (300,55.3) and (295.3,60) .. (289.5,60) .. controls (283.7,60) and (279,55.3) .. (279,49.5) -- cycle ;
\draw   (341,50.5) .. controls (341,44.7) and (345.7,40) .. (351.5,40) .. controls (357.3,40) and (362,44.7) .. (362,50.5) .. controls (362,56.3) and (357.3,61) .. (351.5,61) .. controls (345.7,61) and (341,56.3) .. (341,50.5) -- cycle ;
\draw   (117,126.5) .. controls (117,120.7) and (121.7,116) .. (127.5,116) .. controls (133.3,116) and (138,120.7) .. (138,126.5) .. controls (138,132.3) and (133.3,137) .. (127.5,137) .. controls (121.7,137) and (117,132.3) .. (117,126.5) -- cycle ;
\draw   (161,127.5) .. controls (161,121.7) and (165.7,117) .. (171.5,117) .. controls (177.3,117) and (182,121.7) .. (182,127.5) .. controls (182,133.3) and (177.3,138) .. (171.5,138) .. controls (165.7,138) and (161,133.3) .. (161,127.5) -- cycle ;
\draw   (210,125.5) .. controls (210,119.7) and (214.7,115) .. (220.5,115) .. controls (226.3,115) and (231,119.7) .. (231,125.5) .. controls (231,131.3) and (226.3,136) .. (220.5,136) .. controls (214.7,136) and (210,131.3) .. (210,125.5) -- cycle ;
\draw   (277,126.5) .. controls (277,120.7) and (281.7,116) .. (287.5,116) .. controls (293.3,116) and (298,120.7) .. (298,126.5) .. controls (298,132.3) and (293.3,137) .. (287.5,137) .. controls (281.7,137) and (277,132.3) .. (277,126.5) -- cycle ;
\draw   (339,127.5) .. controls (339,121.7) and (343.7,117) .. (349.5,117) .. controls (355.3,117) and (360,121.7) .. (360,127.5) .. controls (360,133.3) and (355.3,138) .. (349.5,138) .. controls (343.7,138) and (339,133.3) .. (339,127.5) -- cycle ;
\draw   (81,115.6) -- (92.5,60) -- (104,115.6) -- (98.25,115.6) -- (98.25,199) -- (86.75,199) -- (86.75,115.6) -- cycle ;

\draw (236,239) node [anchor=north west][inner sep=0.75pt]   [align=left] {{\LARGE ...}};
\draw (242,170) node [anchor=north west][inner sep=0.75pt]   [align=left] {{\LARGE ...}};
\draw (118,251.4) node [anchor=north west][inner sep=0.75pt]    {$x_{1}^{n}$};
\draw (146,252.4) node [anchor=north west][inner sep=0.75pt]    {$x_{2}^{n} =1$};
\draw (306,171) node [anchor=north west][inner sep=0.75pt]   [align=left] {{\LARGE ...}};
\draw (314,239) node [anchor=north west][inner sep=0.75pt]   [align=left] {{\LARGE ...}};
\draw (265,253.4) node [anchor=north west][inner sep=0.75pt]    {$x_{i}^{n} =1$};
\draw (340,251.4) node [anchor=north west][inner sep=0.75pt]    {$x_{n}^{n}$};
\draw (240,41) node [anchor=north west][inner sep=0.75pt]   [align=left] {{\LARGE ...}};
\draw (309,42) node [anchor=north west][inner sep=0.75pt]   [align=left] {{\LARGE ...}};
\draw (243,120) node [anchor=north west][inner sep=0.75pt]   [align=left] {{\LARGE ...}};
\draw (307,121) node [anchor=north west][inner sep=0.75pt]   [align=left] {{\LARGE ...}};
\draw (121,14.4) node [anchor=north west][inner sep=0.75pt]    {$x_{1}^{0}$};
\draw (165,16.4) node [anchor=north west][inner sep=0.75pt]    {$x_{2}^{0}$};
\draw (342,13.4) node [anchor=north west][inner sep=0.75pt]    {$x_{n}^{0}$};
\draw (283,15.4) node [anchor=north west][inner sep=0.75pt]    {$x_{i}^{0}$};
\draw (118,189.9) node [anchor=north west][inner sep=0.75pt]    {$x_{1}^{n-1}$};
\draw (159,188.9) node [anchor=north west][inner sep=0.75pt]    {$x_{2}^{n-1}$};
\draw (337,188.9) node [anchor=north west][inner sep=0.75pt]    {$x_{n}^{n-1}$};
\draw (273,193.9) node [anchor=north west][inner sep=0.75pt]    {$x_{i}^{n-1}$};
\draw (119,90.4) node [anchor=north west][inner sep=0.75pt]    {$x_{1}^{n-D}$};
\draw (162,88.9) node [anchor=north west][inner sep=0.75pt]    {$x_{2}^{n-D}$};
\draw (279,91.9) node [anchor=north west][inner sep=0.75pt]    {$x_{i}^{n-D}$};
\draw (341,93.9) node [anchor=north west][inner sep=0.75pt]    {$x_{n}^{n-D}$};
\draw (212,76.4) node [anchor=north west][inner sep=0.75pt]    {$\vdots $};
\draw (121,71.4) node [anchor=north west][inner sep=0.75pt]    {$\vdots $};
\draw (118,142.4) node [anchor=north west][inner sep=0.75pt]    {$\vdots $};
\draw (341,73.4) node [anchor=north west][inner sep=0.75pt]    {$\vdots $};
\draw (340,143.4) node [anchor=north west][inner sep=0.75pt]    {$\vdots $};

\end{tikzpicture}
    \caption{\label{fig:visual}  Visual representation of IP~\eqref{ipexact}. }
\end{figure}
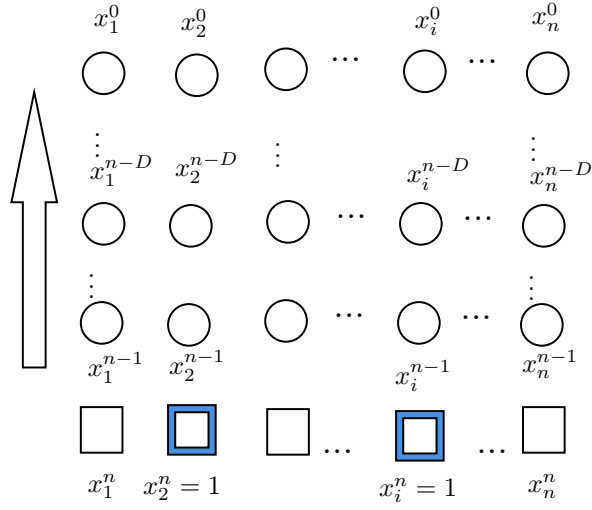


We formulate the \tcand problem as an integer program  (IP)  that provides an optimal solution. This formulation serves as the basis of our LP-based approximation algorithms. Notationally, we define $[k,n] \myeq \{k,k+1,\ldots,n\}$ and $[n]=[1,n]$. We assume, without loss of generality, that each \fd's right-hand side has a size exactly equal to $1$. Figure~\ref{fig:visual} visualizes the variables introduced by our IP. Specifically, we introduce a set $n^2+n$ variables  $x^{d}_i$ for $d \in [0,n], i\in [n]$.  The bottom level  corresponds to the set of Boolean variables $x_i^n$. We constraint the set of variables in this layer corresponding to the target variables (blue dotted circles) to be equal to 1.  In general, the set of variables $\{x_i^{n-D}\}$ will indicate what variables we should include in our output if we allow for $D$ rounds of inference, for $D \in [n]$. For any layer \( k = n, \ldots, 1 \), a variable \( x_i^k \) will be set to 1 if and only if either \( x_i^{k-1} = 1 \) or if there exists an \fd of the form \( i_1, \ldots, i_r \to i \) where all the variables on the left side in the previous layer \( k-1 \) are equal to 1, i.e., \( x_{i_1}^{k-1} = \ldots = x_{i_r}^{k-1} = 1 \). In this case, we shall say that the \fd  is activated.  We can express this logic   as 

$$
x_i^k = \textbf{OR}\left(x_i^{k-1}, \textbf{AND}\left(x_{i_1}^{k-1},\ldots,x_{i_r}^{k-1}\right), \ldots\right),
$$

\noindent where we have a \textbf{AND} term for each \fd of the form \( i_1\ldots i_r \to i \) as linear constraints. The logical operators \textbf{OR}, \textbf{AND} can be easily expressed as linear constraints~\cite{wolsey2020integer}. Towards this formulation, we introduce a new set of variables $z^{(d)}_{LS}$ which corresponds to whether \fds with left hand side $LS$ can be activated in round $d$,  i.e., all the variables participating in $LS$ from the previous layer are already set to 1. Putting those constraints together along with the objective function, we arrive at the following IP for $D$-round \tcand where $1 \leq D \leq n$.

\framebox{
  \begin{minipage}{0.96\linewidth}
\begin{equation}
\label{ipexact}
\begin{array}{ll@{}ll}
\text{minimize}  & \displaystyle\sum\limits_{j=1}^{n} x_j^{n-D}   &\\
\text{subject to}& \displaystyle  
x_i^{d} \leq x_i^{d-1} + \sum_{LS:LS \to i} 
z_{LS}^{d} \text{~~~} \forall d \in [n-D+1,n], i \in [n]\\ 
&z_{LS}^{d} \leq x_j^{d-1} \text{~~~}  \forall LS \to i, \forall j \in LS, d \in [n-D+1,n]
\\
             &              x_i^{n}=1 \text{~~~} \forall i \in T                      \subseteq [n]     & \\
\end{array}
\end{equation}
\end{minipage}
}

\vspace{3mm}
 By setting $D=n$, the IP is guaranteed to return an optimal solution.  This is because  applying the \fd rules a maximum of $n$ times is sufficient; if an iteration fails to fix any additional variables, subsequent iterations will not alter the outcome and   a variable, once set, does not change.

\subsection{Simple \fds} 
\label{sec:simplefds} 

Before addressing the general case of the problem, we focus on an important special case in this section. Notice that without any loss of generality, the right side of any $\fd$ consists of a single attribute.  
We denote the set of all left-sides of \fds as $\mathcal{LS} = \{ LS \subseteq V: (LS \to i) \in \mathcal{F}, i \in V \}$. If an \fd has a single variable on the left side, we refer to this \fd as {\it simple}. Otherwise, we refer to it as {\it non-simple}.  We call a set $\mathcal{F}$ of \fds simple if all \fds in $\mathcal{F}$ are simple.  If there exists at least one non-simple \fd in $\mathcal{F}$, we refer to $\mathcal{F}$ as {\it regular} or {\it non-simple}. For example, the set of \fds $\{ a\to b, b\to c, c \to d \}$ is simple because all \fds are simple. The set of \fds $\{ a\to b, bc\to d \}$ is regular/non-simple since the \fd $bc\to d$ is non-simple. We also define  the following natural graph for any set of \fds.  It is worth mentioning that similar notions exist in the literature, see~\cite{ausiello1983graph} and \cite[Definition 3]{saiedian1996efficient}. 

\begin{definition}
Let $\mathcal{F}$ be a set of \fds. We define the corresponding \fd-graph $G_\mathcal{F}$ (or simply $G$) as follows: for each variable $i$ appearing in $\mathcal{F}$ we create a vertex $v_i$. There exists a directed edge  $(v_i, v_j)$ when there exists an \fd of the form $X\to y$ with $i \in X, y=j$. 
\end{definition}

Designing approximation algorithms is easier for simple \fds. Intuitively, when dealing with a simple set of \fds, the \fd-graph precisely mirrors the input data.  For example, if the \fds are simple and the \fd-graph is strongly connected, any single variable $i$ is an optimal solution for any target set $T$,  as $\{i\}^+=V$. However, in general the \fd-graph may not be strongly connected. However, it is a well-known fact that any directed graph can be decomposed as a directed acyclic graph (DAG) of \emph{strongly connected components} (SCCs)~\cite{tarjan1972depth} and this decomposition requires linear time in the size of the graph. We use this fact to prove a refined approximation result and establish that that the lower bound established by Akutsu and Bao~\cite{akutsu1996approximating} is tight, see also~\cite{feige1998threshold}. We present this result in the subsequent theorem.


\begin{theorem}
\label{thm:simple}
Let $\mathcal{F}$ be a set of simple \fds and let $g$ be the number of the strongly connected components (SCCs) of $G$. Then there exists a polynomial time $\ln g$-approximation algorithm for the \tcand problem. Furthermore, it cannot be approximated within $(1-o(1)) \ln g$ unless $P=NP$.
\end{theorem}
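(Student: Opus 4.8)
The plan is to prove both halves of Theorem~\ref{thm:simple} by establishing a two-way correspondence between \tcand on simple \fds and \setcover, exploiting the fact that for simple \fds inference is pure reachability. First I would record the structural observation that drives everything: when every \fd has a single variable on its left-hand side, the attribute closure coincides with directed reachability in $G$, i.e.\ $j \in X^{+}$ if and only if $j$ is reachable from some $i \in X$ along a directed path. Consequently two variables in the same strongly connected component reach exactly the same vertices, so selecting any single vertex of an SCC $C$ brings into the closure precisely the union $R(C)$ of all SCCs reachable from $C$, and selecting a second vertex of the same SCC is wasteful. Thus there is an optimal $X$ containing at most one vertex per SCC, and $|X|$ equals the number of SCCs it touches.

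For the upper bound I would turn this into an explicit \setcover instance: the universe is the set of \emph{target SCCs} (those containing a variable of $T$, at most $g$ of them), and for each SCC $C$ the covering set is $R(C)$ restricted to target SCCs (at most $g$ sets). By the exchange argument above, the \tcand optimum equals the optimum of this \setcover instance. Running the classical greedy algorithm \cite{lovasz1975ratio,feige1998threshold} on an instance whose universe has size at most $g$ yields an $H_{g} = (1+o(1))\ln g$ approximation, and since the SCC decomposition \cite{tarjan1972depth}, the reachability computation, and the greedy step are all polynomial time, this gives the claimed $\ln g$-approximation.

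For the matching lower bound I would reduce from the optimal \setcover inapproximability \cite{dinur2014analytical}, which rules out a $(1-o(1))\ln n$ approximation ($n$ the universe size) unless $P=NP$. Given a universe $\{u_{1},\dots,u_{n}\}$ with sets $S_{1},\dots,S_{m}$, I would build a simple-\fd instance with a source variable $s_{i}$ per set, a sink variable $e_{j}$ per element, the simple \fds $s_{i}\to e_{j}$ whenever $u_{j}\in S_{i}$, and target set $T=\{e_{1},\dots,e_{n}\}$. The \fd-graph is a source-to-sink bipartite DAG, so every vertex is its own SCC and $g=m+n$. I would then show the \tcand optimum equals the \setcover optimum: any cover gives, via the $s_{i}$, a feasible $X$ of equal size; conversely an $X$ can only reach $e_{j}$ by selecting $e_{j}$ itself or some $s_{i}$ with $u_{j}\in S_{i}$, so $X$ is a cover augmented with the option of ``buying'' individual elements, and under feasibility (every $u_{j}$ lies in some set) buying a set $A$ of elements directly can always be replaced by at most $|A|$ sets, so some optimal $X$ uses only $s_{i}$'s. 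Hence a $(1-o(1))\ln g$-approximation for \tcand would contradict \cite{dinur2014analytical}, refining the bound of Akutsu and Bao \cite{akutsu1996approximating} to exact tightness.

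The main obstacle I anticipate is \emph{not} the reachability/SCC-contraction or the exchange argument, which are routine, but the quantitative calibration between $g$ and $n$ needed to preserve the sharp $(1-o(1))$ constant. Because $g=m+n$, the hardness transfers as $(1-o(1))\ln g=(1-o(1))\ln n$ only when $\ln g=(1+o(1))\ln n$, i.e.\ when the number of sets satisfies $m=n^{1+o(1)}$; invoking the hard \setcover instances in a sufficiently economical form (near-linearly many sets) is the delicate step, and I would either arrange this in the reduction or absorb any polynomial blow-up into the stated asymptotics. This bookkeeping, together with verifying that directly selecting target vertices never beats set-selection, is where I expect the real care to be required.
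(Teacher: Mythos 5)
Your proposal matches the paper's proof essentially step for step: the upper bound contracts the \fd-graph into its SCCs and runs greedy \setcover on the instance whose sets are the target variables reachable from each SCC (the paper phrases the universe as $T$ rather than as target SCCs, which is equivalent since targets in a common SCC are covered by exactly the same sets), and the lower bound is precisely the sets-as-sources, elements-as-sinks bipartite reduction that the paper leaves implicit in its one-line appeal to Dinur--Steuer. Your calibration concern about $\ln(m+n)$ versus $(1-o(1))\ln n$ is legitimate but resolvable exactly as you suggest, since the standard hard \setcover instances (the same ones used for the analogous $(1-o(1))\ln N$ hardness of dominating set) have $m \le n^{1+o(1)}$ --- a point the paper glosses over entirely, so your treatment is if anything more careful than the published proof.
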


\begin{proof}
We define for each node $v_i$ in the \fd-graph $G$, the set $\mathrm{Reach}(v_i)$ of nodes that are reachable from $v_i$. Then, $S_i := \mathrm{Reach}(v_i) \cap T$ is the set of target variables reachable from $v_i$.  Notice that for any $v_i, v_j$ that belong to the same SCC the sets $S_i, S_j$ are the same, i.e.,  $S_i=S_j$. Thus, we define $\hat{S}_i, i=1,\ldots,g$ to be the set of target variables reachable from each SCC.  It is straight-forward to observe that solving the \setcover problem for the instance defined by $\{\hat{S}_1 \cap T,\ldots,\hat{S}_g \cap T\}$ and universe $T$ yields the optimal solution. This allows us to use the well known $\ln g - \ln \ln g + \Theta(1)$-approximation greedy algorithm for  the \setcover~\cite{slavik1996tight}.  By Dinur and Steuer, the \tcand for simple \fds cannot be approximated to \( (1 - o(1)) \cdot \ln g \) unless \( P=\text{NP} \).
\end{proof}

\noindent   Notice that the number of SCCs $g$ can be significantly less than $n$.   Furthermore, our proof directly yields that  the above bound can be further tightened to the logarithm of the number of sources in the DAG of the \fd-graph, which is trivially upper bounded  by the number of SCCs. This is the case for it suffices to pick at most one node from each strongly connected component of $G$ that is a source; this can be any node.  We state this as the following corollary.

\begin{corollary}
Let $\mathcal{F}$ be a set of simple \fds and let $s$ be the number of the source nodes in the SCC DAG of the fd-graph $G$. Then there exists a polynomial time $\ln s$-approximation algorithm for the \tcand problem. Furthermore, it cannot be approximated within $(1-o(1)) \ln s$ unless $P=NP$.
\end{corollary}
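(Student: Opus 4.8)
The plan is to reduce the corollary to Theorem~\ref{thm:simple} by showing that the optimal solution can always be assumed to consist of source SCCs only, so that the effective \setcover instance has exactly $s$ ``meaningful'' sets rather than $g$. First I would recall the construction from the proof of Theorem~\ref{thm:simple}: we contract $G$ into its SCC DAG, and for each SCC $C$ we let $\hat S_C = \mathrm{Reach}(C)\cap T$ be the target variables reachable from $C$. The key structural observation is the monotonicity of reachability along the DAG: if SCC $C'$ is reachable from SCC $C$, then $\hat S_{C'}\subseteq \hat S_C$. In particular, every non-source SCC $C$ has some source SCC $C^\ast$ that reaches it, whence $\hat S_C\subseteq \hat S_{C^\ast}$. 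Thus in the \setcover instance $\{\hat S_C\}$ over universe $T$, the set contributed by any non-source SCC is dominated by (contained in) the set contributed by some source SCC, and therefore can be discarded without changing the optimal cover value.

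With that reduction in hand, the approximation algorithm is immediate: I would run the greedy \setcover algorithm on the restricted instance $\{\hat S_{C}: C\text{ is a source SCC}\}$, which has exactly $s$ sets, and output one representative variable (any vertex will do, since within an SCC all vertices have the same reach) from each chosen source SCC. Because the restricted instance has the same optimum as the full instance — every cover using a non-source set can be rewritten with a dominating source set of no greater cardinality — the greedy routine yields a $\ln s$-approximation via the same $\ln s - \ln\ln s + \Theta(1)$ guarantee of~\cite{slavik1996tight} invoked in Theorem~\ref{thm:simple}, now with $s$ in place of $g$. I should also note that the optimal \tcand solution itself can be taken to contain only source vertices, since any variable lying in a non-source SCC is functionally determined by some variable in a source SCC that reaches it, so replacing it loses nothing; this justifies measuring the approximation against an optimum that lives entirely among the $s$ sources.

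For the matching hardness, I would mirror the lower-bound half of Theorem~\ref{thm:simple}. The point is that the reduction from \setcover to \tcand with simple \fds can be arranged so that the number of sources $s$ in the resulting \fd-graph equals the number of sets in the \setcover instance, making every source SCC a genuine, non-redundant set. Concretely, given a \setcover instance one builds a simple-\fd\ instance whose source SCCs correspond bijectively to the input sets (each set becomes a source with edges reaching exactly the target vertices encoding its elements), so a $c\ln s$-approximation for \tcand would give a $c\ln s$-approximation for \setcover on an instance of $s$ sets; the Dinur--Steuer optimal inapproximability~\cite{dinur2014analytical} then forbids $(1-o(1))\ln s$ unless $P=NP$.

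The main obstacle I anticipate is verifying the two domination claims cleanly: that $\hat S_{C'}\subseteq\hat S_C$ whenever $C\rightsquigarrow C'$ in the DAG, and the consequent fact that restricting to source SCCs preserves the \setcover optimum exactly (not merely up to a constant). Both follow from transitivity of reachability, but I would state them as explicit lemmas to be sure the set-containment direction is correct — reachability grows as one moves \emph{toward} sources, so it is the sources that carry the maximal reachable target sets, which is precisely why dropping the non-source sets is safe. Everything else is a direct specialization of the already-established Theorem~\ref{thm:simple}.
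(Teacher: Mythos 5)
Your proposal is correct and takes essentially the same route as the paper: the paper's entire justification for this corollary is the preceding remark that it suffices to pick at most one (arbitrary) node from each source SCC, which is exactly the domination/OPT-preservation observation you make explicit, followed by the same greedy upper bound and the same Dinur--Steurer lower bound inherited from Theorem~\ref{thm:simple}. Your explicit lemmas (monotonicity of $\hat{S}_C$ along the SCC DAG, hence containment of every non-source set in some source set, and the hardness reduction in which source SCCs correspond bijectively to the \setcover sets) are faithful elaborations of what the paper leaves implicit rather than a genuinely different argument.
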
 


\subsection{LP Relaxation and Approximation Algorithms}
\label{sec:approx} 

Our building block for the $D$-round \tcand problem is solving the 1-round \tcand problem and then iterating this algorithm for $D$ layers. Given its importance and for the reader's convenience, we introduce it as a special case with simplified notation compared to the IP~\eqref{ipexact}. 

\noindent \spara{Single layer/1-round  \tcand.} Our goal in the single-layer problem is to choose a set of variables $S \subseteq [n]$ such that their {\em one step} closure includes the target variables. By the term {\em one step} closure, we mean that an attribute $i$ is active if it is either included in $S$ or if there exist attributes $i_1,\ldots,i_r$ such that they are all active (i.e., in $S$) and there exists an \fd in $\mathcal{F}$ of the form $i_1i_2\ldots i_r \to i$.   
The bottom row with the squares encodes the target attributes; for each attribute we have a Boolean variable $x_i$ and this is set to $1$ for each target variable (blue filled squares); the rest may be 0 or 1 depending on which set of variables we will activate on the top row. To encode the   output set $S$ we define Boolean variables $y_i$ for $i=1,\ldots,n$. Our goal is to minimize the number of variables we include in $S$ or in terms of the $y$ variables the sum $\sum_{i=1}^n y_i$. The constraint of covering the target variables is expressed as $x_i=1$ for each target variable $i \in T$. The connection between the $y$ and $x$ variables --as explained also earleir-- is expressed as follows  
$x_i =  \textbf{OR}(y_i,  \textbf{AND}(y_{i_1},\ldots,y_{i_r}), \ldots)$ 
where we include in the {\bf OR} all \fds of the form $i_1\ldots i_r \to i$ as the {\bf AND} of the corresponding left-side variables $i_1,\ldots,i_r $.  Figure~\ref{fig:onelayer} illustrates a single-layer version of the \tcand problem. Each column corresponds to an attribute $i$, $i=1,\ldots,n$.

\begin{figure}
\centering
\tikzset{every picture/.style={line width=0.75pt}} 

\begin{tikzpicture}[x=0.75pt,y=0.75pt,yscale=-1,xscale=1]

\draw  [fill={rgb, 255:red, 74; green, 144; blue, 226 }  ,fill opacity=1 ] (141,164) -- (165,164) -- (165,189) -- (141,189) -- cycle(161.4,167.6) -- (144.6,167.6) -- (144.6,185.4) -- (161.4,185.4) -- cycle ;
\draw   (97,165) -- (118,165) -- (118,188) -- (97,188) -- cycle ;
\draw   (191,166) -- (212,166) -- (212,189) -- (191,189) -- cycle ;
\draw   (320,165) -- (341,165) -- (341,188) -- (320,188) -- cycle ;
\draw   (97,133.5) .. controls (97,127.7) and (101.7,123) .. (107.5,123) .. controls (113.3,123) and (118,127.7) .. (118,133.5) .. controls (118,139.3) and (113.3,144) .. (107.5,144) .. controls (101.7,144) and (97,139.3) .. (97,133.5) -- cycle ;
\draw   (141,134.5) .. controls (141,128.7) and (145.7,124) .. (151.5,124) .. controls (157.3,124) and (162,128.7) .. (162,134.5) .. controls (162,140.3) and (157.3,145) .. (151.5,145) .. controls (145.7,145) and (141,140.3) .. (141,134.5) -- cycle ;
\draw   (190,132.5) .. controls (190,126.7) and (194.7,122) .. (200.5,122) .. controls (206.3,122) and (211,126.7) .. (211,132.5) .. controls (211,138.3) and (206.3,143) .. (200.5,143) .. controls (194.7,143) and (190,138.3) .. (190,132.5) -- cycle ;
\draw   (257,133.5) .. controls (257,127.7) and (261.7,123) .. (267.5,123) .. controls (273.3,123) and (278,127.7) .. (278,133.5) .. controls (278,139.3) and (273.3,144) .. (267.5,144) .. controls (261.7,144) and (257,139.3) .. (257,133.5) -- cycle ;
\draw   (319,134.5) .. controls (319,128.7) and (323.7,124) .. (329.5,124) .. controls (335.3,124) and (340,128.7) .. (340,134.5) .. controls (340,140.3) and (335.3,145) .. (329.5,145) .. controls (323.7,145) and (319,140.3) .. (319,134.5) -- cycle ;
\draw  [fill={rgb, 255:red, 74; green, 144; blue, 226 }  ,fill opacity=1 ] (256,167) -- (280,167) -- (280,192) -- (256,192) -- cycle(276.4,170.6) -- (259.6,170.6) -- (259.6,188.4) -- (276.4,188.4) -- cycle ;

\draw (220,180) node [anchor=north west][inner sep=0.75pt]   [align=left] {{\LARGE ...}};
\draw (223,127) node [anchor=north west][inner sep=0.75pt]   [align=left] {{\LARGE ...}};
\draw (99,96.4) node [anchor=north west][inner sep=0.75pt]    {$y_{1}$};
\draw (143,96.4) node [anchor=north west][inner sep=0.75pt]    {$y_{2}$};
\draw (322,99.4) node [anchor=north west][inner sep=0.75pt]    {$y_{n}$};
\draw (99,197.4) node [anchor=north west][inner sep=0.75pt]    {$x_{1}$};
\draw (127,198.4) node [anchor=north west][inner sep=0.75pt]    {$x_{2} =1$};
\draw (193,198.4) node [anchor=north west][inner sep=0.75pt]    {$x_{3}$};
\draw (192,96.4) node [anchor=north west][inner sep=0.75pt]    {$y_{3}$};
\draw (287,128) node [anchor=north west][inner sep=0.75pt]   [align=left] {{\LARGE ...}};
\draw (291,182) node [anchor=north west][inner sep=0.75pt]   [align=left] {{\LARGE ...}};
\draw (259,96.4) node [anchor=north west][inner sep=0.75pt]    {$y_{i}$};
\draw (246,199.4) node [anchor=north west][inner sep=0.75pt]    {$x_{i} =1$};
\draw (321,197.4) node [anchor=north west][inner sep=0.75pt]    {$x_{n}$};

\end{tikzpicture}
\caption{\label{fig:onelayer} The Boolean variables $x_1,\ldots,x_n$ denote whether an attribute $i$ is {\it active}, i.e.,  meaning it is included in the closure of the selected set of variables.  The target variables are all set to 1.  }
\end{figure}
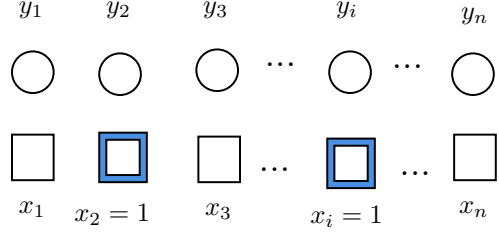

\noindent \spara{LP Relaxation.}
We explicitly express the linear programming relaxation of the integer program $\eqref{ipexact}$, incorporating a single layer or round of  \fd inference. In Sections~\ref{subsec:deterministic} and \ref{subsec:randomized} 
we present a deterministic and randomized based on the following LP relaxation and we show how we apply it for the $D$-round \tcand problem. For simplicity, we assume, without any loss of generality, that  the set of \fds includes the valid \fds $i \to i$ for each $i \in [n]$.

\framebox{%
  \begin{minipage}{0.96\linewidth}
\begin{equation}
\label{lp:oneround}
\begin{array}{ll@{}ll}
\text{minimize}  & \displaystyle\sum\limits_{j=1}^{n} & y_{j} &\\
\text{subject to}& \displaystyle y_i + \sum\limits_{LS \to i}   & z_{LS} \geq 1,  & \forall i \in T\\
                 & \displaystyle     & z_{LS} \leq y_j,  & \forall j \in LS  \text{~~where~~}  LS \to i,   i \in T \\
                                  
                                  
                 &                                                &y_{j}, z_{LS} \in [0,1], & \forall j \in V, \forall LS \text{~~~of the form~~} LS \to i,  i \in T 
\end{array}
\end{equation}
\end{minipage}}

\vspace{2mm}

\subsubsection{Deterministic rounding} 
\label{subsec:deterministic} 

Our deterministic rounding approximation algorithm for the single-round  \tcand problem solves the LP relaxation $\eqref{lp:oneround}$ and outputs all attributes $i$ for which the corresponding variable $y_i$ is at least a certain threshold. The threshold value is determined by the input.  Define $f_i$ as the number of input \fds of the form $X \to i$, represented by $f_i = |{ X \subseteq V: X \to i \in \mathcal{F} }|$. Let $f$ denote the maximum value of $f_i$ across all elements in $V$, i.e., $f = \max_{i \in V} f_i$.   In simple terms, \( f \) represents the maximum number of variables on the left side of any \fd in our collection of \fds, $\mathcal{F}$.
The threshold value is set equal to
 $\frac{1}{f+1}$. This simple process is outlined for completeness as Algorithm~\ref{alg:lprounding} and the approximation guarantee states as Theorem~\ref{thm:fplus1}.

  \begin{algorithm}
\begin{algorithmic}
\STATE Solve the LP relaxation~\eqref{lp:oneround}
\STATE $I \leftarrow \{ i : y_i   \geq \frac{1}{f+1} \}$
\STATE Output $I$
\end{algorithmic}
\caption{ \label{alg:lprounding} $(f+1)$ approximation algorithm for the 1-round \tcand (Problem~\ref{prob:rtcand} with $D=1$)}
\end{algorithm}

\begin{theorem}
\label{thm:fplus1}
Algorithm~\ref{alg:lprounding} is an $(f+1)$ approximation for the 1-round \tcand problem.     
\end{theorem}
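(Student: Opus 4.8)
The plan is to run the standard deterministic, frequency-based LP-rounding argument (the one that yields an $f$-factor for \setcover), adapted to the one-step closure semantics of the $1$-round \tcand problem. The argument splits cleanly into two parts: showing that the returned set $I$ is \emph{feasible} for the $1$-round instance, and \emph{bounding} $|I|$ against the LP optimum. The counting part is routine; the feasibility part is where the layered structure of the LP must be used, and it is the only genuinely non-trivial step.

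For feasibility, I would fix a target $i \in T$ and look at its covering constraint $y_i + \sum_{LS \to i} z_{LS} \ge 1$. This is a sum of $1 + f_i \le 1 + f$ nonnegative terms, so by averaging at least one of them is at least $\tfrac{1}{f+1}$, and I distinguish two cases. If the large term is $y_i$ itself, then $y_i \ge \tfrac{1}{f+1}$, so $i \in I$ by construction and $i$ is trivially in the one-step closure of $I$. Otherwise some $z_{LS} \ge \tfrac{1}{f+1}$ for an \fd $LS \to i$; here I invoke the second family of constraints $z_{LS} \le y_j$ for every $j \in LS$, which forces $y_j \ge z_{LS} \ge \tfrac{1}{f+1}$, so \emph{every} variable of $LS$ lands in $I$. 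Since $LS \to i \in \mathcal{F}$ and all of $LS$ is selected, the \fd fires in a single round and $i$ belongs to the one-step closure of $I$. Either way $i$ is covered, so $I$ is feasible for $1$-round \tcand.

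For the cost bound, note that every $i \in I$ satisfies $(f+1)\,y_i \ge 1$, so $|I| = \sum_{i \in I} 1 \le (f+1)\sum_{i \in I} y_i \le (f+1)\sum_{j=1}^{n} y_j = (f+1)\,\mathrm{OPT}_{LP}$; and since any integral solution is LP-feasible, $\mathrm{OPT}_{LP} \le \mathrm{OPT}$, giving $|I| \le (f+1)\,\mathrm{OPT}$. I expect the main obstacle to be the feasibility step rather than the arithmetic: one must observe that the two constraint families of \eqref{lp:oneround} were designed precisely so that whenever the ``credit'' for covering $i$ rests on an \fd rather than on $i$ directly, the entire left-hand side of that \fd is pushed above threshold simultaneously, so the \fd can actually be activated in one round. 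The one subtlety worth checking is that the averaging bound uses $f_i + 1$ terms (the local frequency of $i$) while the rounding threshold uses the global $f+1$; this is sound because $f_i \le f$.
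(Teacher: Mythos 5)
Your proposal is correct and follows essentially the same route as the paper's proof: the same averaging over the at most $f+1$ terms of the covering constraint $y_i + \sum_{LS \to i} z_{LS} \ge 1$, the same two-case analysis in which the constraints $z_{LS} \le y_j$ push all of $LS$ above the threshold $\tfrac{1}{f+1}$ so the \fd fires in one round, and the same cost bound via $|I| \le (f+1)\sum_{i} y_i \le (f+1)\,\mathrm{OPT}_{LP} \le (f+1)\,\mathrm{OPT}_{IP}$. If anything, your write-up is marginally more careful than the paper's on two cosmetic points: you correctly write $\sum_{i \in I} y_i \le \mathrm{OPT}_{LP}$ where the paper asserts equality, and you make explicit that the averaging uses $1+f_i$ terms with $f_i \le f$.
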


\begin{proof}

\underline{Feasibility:} First we prove that $T \subseteq I^+$. We note that for all $i \in T$ the inequality 
\[  y_i + \sum\limits_{LS \to i}     z_{LS} \geq 1 \]

\noindent   implies that at least one of the $f+1$ summands will be at least $\frac{1}{f+1}$. If that summand is $y_i$, then we add $i$ to $S$, and thus $i$ is trivially in the closure of $S^+$.   Otherwise, $z_{LS} \geq \frac{1}{f+1}$ for some $LS$ and due to the linear constraints $y_j \geq \frac{1}{f+1}, \forall j \in LS$. This means that all such $j$ will be added to $I$ and therefore $i$ will be in the closure $I^+$.

\noindent \underline{Approximation guarantees:} The cost of the solution is upper bounded as follows:

\[ |I| \leq (f+1) \cdot \sum_{i \in I} y_i = (f+1) \cdot OPT_{LP} \leq (f+1) \cdot OPT_{IP}.\]
 
This  completes our proof.
\end{proof}

\noindent We now state our main result for the $D$-round \tcand problem as  Theorem~\ref{thm:lpapprox}. 
Our proof is constructive. To solve the $D$-round \tcand problem, we iteratively apply Algorithm~\ref{alg:lprounding} $D$ times, allowing for $D$ rounds of \fd inference. Our algorithm is described as Algorithm~\ref{alg:lproundingD}. We return to the indexing notation previously used in Figure~\ref{fig:visual}. Observe that in this case the objective becomes the minimization of $\sum\limits_{j=1}^{n} x_j^{n-D}$ where $ D\leq n$.  Next, we present Theorem~\ref{thm:lpapprox}, which demonstrates how the approximation algorithm we developed for the 1-round \tcand can be applied to the D-round \tcand.

\begin{algorithm}
\begin{algorithmic}
\STATE Solve the LP relaxation~\eqref{ipexact} to obtain fractional values for all variables $x,z$
\STATE $ I \leftarrow \{ i : x_i^{n-D}  \geq \frac{1}{(f+1)^D} \}$
\STATE Output $I$
\end{algorithmic}
\caption{ \label{alg:lproundingD} $(f+1)^D$ approximation algorithm for the $D$-round \tcand (Problem~\ref{prob:rtcand})}
\end{algorithm}

\begin{theorem}(Approximating \tcand)
\label{thm:lpapprox}
There exists a polynomial time $(f+1)^D$-approximation algorithm for the $D$-round \tcand problem.
\end{theorem}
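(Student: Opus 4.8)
The plan is to analyze Algorithm~\ref{alg:lproundingD} directly, splitting the argument into a feasibility claim ($T \subseteq I^+$ reachable within $D$ rounds) and a cost bound ($|I| \le (f+1)^D \cdot OPT_{IP}$), exactly mirroring the two-part structure of the proof of Theorem~\ref{thm:fplus1}. The intuition is that the threshold $\frac{1}{(f+1)^D}$ is simply the single-round threshold $\frac{1}{f+1}$ compounded across the $D$ inference layers of IP~\eqref{ipexact}: each time we descend one layer we may lose a factor of $f+1$ in the fractional mass, because a covering constraint $x_i^d \le x_i^{d-1} + \sum_{LS \to i} z_{LS}^d$ has at most $f+1$ summands on its right-hand side (one ``stay'' term $x_i^{d-1}$ plus at most $f_i \le f$ activated left-hand sides).

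For feasibility I would set up an induction on the layer index. Let $A_k$ denote the set of attributes that are active after performing $k$ rounds of \fd inference starting from the selected set $I = A_0$. The invariant I would prove is: for every $k \in [0,D]$, if $x_i^{n-D+k} \ge \frac{1}{(f+1)^{D-k}}$ then $i \in A_k$. The base case $k=0$ is immediate from the definition of $I$. For the inductive step, fix $i$ with $x_i^{n-D+k} \ge \frac{1}{(f+1)^{D-k}}$ and invoke the layer-$(n-D+k)$ constraint $x_i^{n-D+k} \le x_i^{n-D+k-1} + \sum_{LS \to i} z_{LS}^{n-D+k}$; since there are at most $f+1$ summands, one of them is at least $\frac{1}{(f+1)^{D-k+1}} = \frac{1}{(f+1)^{D-(k-1)}}$. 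If that summand is $x_i^{n-D+k-1}$, the induction hypothesis places $i$ in $A_{k-1} \subseteq A_k$; if it is some $z_{LS}^{n-D+k}$, then the constraints $z_{LS}^{n-D+k} \le x_j^{n-D+k-1}$ force every $j \in LS$ above the same threshold, so by the hypothesis all of $LS$ lies in $A_{k-1}$ and one further inference step via $LS \to i$ puts $i$ into $A_k$. Instantiating the invariant at $k=D$, where the threshold becomes $1$ and $x_i^n = 1$ for all $i \in T$, yields $T \subseteq A_D$, i.e.\ $T \subseteq I^+$ within $D$ rounds.

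The cost bound is then the same one-line counting argument as before: every $i \in I$ satisfies $(f+1)^D x_i^{n-D} \ge 1$, so $|I| \le (f+1)^D \sum_{i \in I} x_i^{n-D} \le (f+1)^D \sum_{j=1}^n x_j^{n-D} = (f+1)^D \cdot OPT_{LP} \le (f+1)^D \cdot OPT_{IP}$. The main obstacle is entirely in the feasibility step: one must choose the correct geometrically decaying threshold per layer so that the ``at most $f+1$ summands, hence one of them is large'' pigeonhole fires cleanly at every layer, and must verify that the $z$-variable constraints propagate the threshold to the \emph{entire} left-hand side $LS$ (not just one attribute of it) so that the \fd $LS \to i$ is genuinely activated in the next round. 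Once the invariant is stated in this threshold-indexed form, both the polynomial running time (a single LP solve plus a linear scan) and correctness follow directly.
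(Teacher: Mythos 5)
Your proposal is correct and matches the paper's own proof: the same threshold $\frac{1}{(f+1)^D}$, the same pigeonhole over the at most $f+1$ summands of each constraint $x_i^d \le x_i^{d-1} + \sum_{LS \to i} z_{LS}^d$, the same propagation of the threshold to all of $LS$ via $z_{LS}^d \le x_j^{d-1}$, and the identical cost bound. Your only departure is cosmetic and welcome --- you state the layer-indexed induction invariant ($x_i^{n-D+k} \ge (f+1)^{-(D-k)}$ implies $i \in A_k$) explicitly, where the paper compresses the same argument into ``backwards induction and the same averaging argument.''
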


 \begin{proof}[Proof]
Similar to the proof of Theorem~\ref{thm:fplus1}, we first establish feasibility, ensuring that the closure of the output set $I$ contains all target variables $T$, and then we prove the approximation guarantee.

\noindent \underline{Feasibility:} 
For each target variable $i \in T$, $x_i^n=1$. By the LP inequality, 

\begin{equation}
\label{eq:ineq}
      1 = x_i^{n} \leq x_i^{n-1} + \sum_{LS:LS\to i} z_{LS}^{(n-1)} 
\end{equation}

\noindent with the same reasoning as in Theorem~\ref{thm:fplus1},    either 
 $x_i^{n-1} \geq \frac{1}{f+1}$ or $z_{LS}^{n-1} \geq \frac{1}{f+1}$
 for some $LS\to i$.   The latter inequality yields that 
 \[ x_j^{n-1} \geq \frac{1}{f+1}, \forall j \in LS.\]
 
 Thus, either of the sets $  \{x_i^{n-1}\}, \{ x_j^{n-1} \}_{j \in LS}$  for some \fd $LS \to i$ must have all its elements larger than $\frac{1}{f+1}$.   We apply the same reasoning for the previous layer $n-2$, with the difference that the left-hand-side of Inequality~\ref{eq:ineq} is $\frac{1}{f+1}$. This implies that   for a variable 
 $x_j^{n-1}$ that is at least $\frac{1}{f+1}$ either $x_j^{n-2} \geq \frac{1}{(f+1)}\times \frac{1}{(f+1)}=\frac{1}{(f+1)^2}$ or 
 $x_{j'}^{n-2} \geq \frac{1}{(f+1)^2}$ for all $j' \in LS'$ for some $LS' \to j$ with $z_{LS'} \geq \frac{1}{(f+1)^2}$. 
Using  backwards induction and the same averaging argument, we obtain that $I$ is a feasible solution.


\noindent \underline{Approximation guarantees:} We simply observe that the output size $|I|$ satisfies

 \[ |I| \leq (f+1)^D \cdot \sum_{i \in I} x_i \leq (f+1)^D OPT_{LP} \leq (f+1)^D OPT_{IP},\]

\noindent which yields the desired bound.
 
 \end{proof}

\subsubsection{Randomized rounding} 
\label{subsec:randomized} 

Our randomized algorithm relies again on solving the LP relaxation~\eqref{lp:oneround} to obtain $\{ y_i\}_{i \in [n]}$ values and   on the KKMS algorithm~\cite{kierstead2010fast} as a subroutine for finding an equitable Hajnal-Szemer\'{e}di partition as described in Lemma~\ref{lem:equi}.  The algorithm is outlined in pseudocode as Algorithm~\ref{alg:randomizedrounding}. Let $\mathcal{LS}$ be the set of all  left-side sets of variables that appear in $\mathcal{F}$, i.e., $\mathcal{LS}=\{LS: LS \to i \in \mathcal{F} \}$. Define $\Delta = \max_{LS \in \mathcal{LS}} |\{LS' \in \mathcal{LS}: LS \cap LS' \neq \emptyset\}|$ denote the maximum number of \fds that share at least one common attribute with any \fd in $\mathcal{F}$.  The algorithm initiates a set $OUT$ that will contain a set of  variables whose closure will contain the target set $T$ with high probability. The algorithm considers each target element separately.  To determine which variables we will include in the set $OUT$ we use   the constructive polynomial time algorithm for Hajnal-Szemer\'{e}di~\cite{HajnalSzemeredi} lemma~\ref{lem:equi}. This allows us to find partition the \fds into sets  whose left sides share no attribute. This  ensures that the joint distribution of those \fds factor over  the individual left sides due to independence; we elaborate more on this in the following paragraph.  Among those sets we choose one set with the property that the sum of $z_{LS}$ values is at least $\frac{1}{\Delta+1}$; such a set is guaranteed to exist by a simple averaging argument.

Specifically, let us consider the solution to the 1-round \tcand LP relaxation, yielding fractional values $y_1, \ldots, y_n$. Focusing on a specific target element $t \in T$, we define $\mathcal{F}_t = \{LS \in \mathcal{LS}: LS \to t \}$ as the set of \fds with variable $t$ on their right-hand side. Viewing the left sides of these \fds as a collection of hyperedges, our objective is to randomly select attributes such that at least one hyperedge   ``survives'' after sampling,  namely all the variables are included in $OUT$. This ensures
 that $t$ is included in the closure of the randomized output.  
 As mentioned earlier, analyzing this randomized procedure involves dependencies; the joint distribution of the survival  of two hyperedges does not factor over the variables in them, since they may overlap.  
 e.g.,  $X \to t$ and $Y \to t$, share attributes, i.e., $X \cap Y \neq \emptyset$. This interdependence adds complexity to the analysis of a randomized rounding approach akin to the \setcover algorithm~\cite[Section 1.7]{williamson2011design} and we address it using Lemma~\ref{lem:equi}. We state our main result as the next theorem.

 \begin{algorithm}
\begin{algorithmic}
\STATE Solve the LP relaxation~\eqref{lp:oneround} to obtain $\{ y_i\}_{i \in [n]}$ values
\STATE Compute $\Delta = \max_{LS \in \mathcal{LS}} |\{LS' \in \mathcal{LS}: LS \cap LS' \neq \emptyset\}|$
\STATE $OUT \leftarrow \emptyset$
\FOR{ each target element $t \in T$}
\STATE Find a set $S_{j^\star}$ of \fds $\{ LS \to t\}$  with the properties that (i) no two \fds share an attribute and (ii) whose sum of $z_{LS} \geq \frac{1}{\Delta+1}$ using the KKMS algorithm~\cite{kierstead2010fast}.
\STATE For each variable $k \in S_{j^\star}$, toss $2  (\Delta+1) \log n$ coins each with success probability $y_k$. 
\IF{ success at least once for variable $k$ } 
\STATE $OUT \leftarrow OUT \cup \{k\}$ 
\ENDIF 
\ENDFOR
\STATE Return $OUT$
\end{algorithmic}
\caption{ \label{alg:randomizedrounding} $2\log n (\Delta+1)$-approximation algorithm for the $1$-round-\tcand (Problem~\ref{prob:rtcand} with $D=1$)}
\end{algorithm}

\begin{theorem}
\label{thrm:randomizedrounding} 
Then, there exists a polynomial-time $c(\Delta+1)\log n$-approximation algorithm that solves the 1-round \tcand problem with high probability, where $c$ is a constant.
\end{theorem}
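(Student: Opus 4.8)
The plan is to establish two properties of the set $OUT$ returned by Algorithm~\ref{alg:randomizedrounding}: (i) \emph{feasibility with high probability}, namely $T \subseteq OUT^+$ after a single round of inference, and (ii) a \emph{cost bound} $|OUT| = O((\Delta+1)\log n)\cdot OPT_{IP}$. The crux is that, for FDs sharing a common target $t$, the events ``$LS \to t$ is fully activated'' are \emph{not} independent, because their left-hand sides may overlap; this is precisely the difficulty flagged when contrasting with the standard \setcover rounding of~\cite[Section~1.7]{williamson2011design}. I would resolve it by invoking Lemma~\ref{lem:equi} to extract a sub-family of FDs with pairwise disjoint supports, over which the activation events factorize.

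For feasibility I would argue one target at a time and close with a union bound over the at most $n$ targets. Fix $t \in T$ and consider $\mathcal{F}_t = \{LS : LS \to t\}$. Combining the covering constraint $y_t + \sum_{LS \to t} z_{LS} \ge 1$ of LP~\eqref{lp:oneround} with the self-loop \fd $t \to t$ (which lets us take $z_{\{t\}} = y_t$) yields $\sum_{LS \to t} z_{LS} \ge \frac{1}{2}$. Next I build the conflict graph on $\mathcal{F}_t$ that joins two FDs whenever their left-hand sides intersect; by the definition of $\Delta$ this graph has maximum degree at most $\Delta-1$, so the KKMS realization of Lemma~\ref{lem:equi} partitions $\mathcal{F}_t$ into at most $\Delta+1$ color classes, each an independent set, i.e.\ a family of FDs with pairwise disjoint left-hand sides. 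An averaging step then produces a class $S_{j^\star}$ with $\sum_{LS \in S_{j^\star}} z_{LS} \ge \frac{1}{2(\Delta+1)}$, which is the class the algorithm selects. Since the left-hand sides inside $S_{j^\star}$ are disjoint, the activation events $A_{LS} = \{LS \subseteq OUT\}$ for $LS \in S_{j^\star}$ depend on disjoint sets of coin tosses and are therefore independent. Tossing $N = 2(\Delta+1)\log n$ coins of bias $y_k$ per variable $k$ makes $k$ present with probability $1-(1-y_k)^N$, and using $z_{LS} \le y_k$ for every $k \in LS$ I would lower bound $\Prob{A_{LS}}$ and then control $\Prob{\bigcap_{LS \in S_{j^\star}} \overline{A_{LS}}} = \prod_{LS \in S_{j^\star}}\left(1-\Prob{A_{LS}}\right)$, showing it is at most $n^{-\Omega(1)}$. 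This is exactly the Chernoff-type concentration under limited dependence mentioned in Section~\ref{sec:related}: the coloring converts one dependent family into $O(\Delta)$ independent ones, and the $\Theta((\Delta+1)\log n)$ amplification supplies the logarithmic slack the union bound needs.

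For the cost bound I would use linearity of expectation. A variable $k$ can enter $OUT$ only through coin tosses of bias $y_k$, so bounding $\Prob{k \in OUT}$ by $O(N\,y_k)$ and summing over $k$ gives $\Mean{|OUT|} = O((\Delta+1)\log n)\cdot \sum_{k} y_k = O((\Delta+1)\log n)\cdot OPT_{LP} \le O((\Delta+1)\log n)\cdot OPT_{IP}$, mirroring the deterministic accounting in the proof of Theorem~\ref{thm:fplus1}. A Markov step (or a further concentration argument) upgrades this expected bound to a high-probability bound, and intersecting it with the feasibility event through one more union bound delivers a valid solution of the claimed size with high probability. Polynomial running time is immediate: LP~\eqref{lp:oneround} is solved in polynomial time, the equitable coloring is computed in $O(\Delta n^2)$ time by the KKMS algorithm~\cite{kierstead2010fast}, and the sampling loop is polynomial.

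The single genuine obstacle is the dependency among the activation events induced by overlapping left-hand sides; the remainder is bookkeeping. The equitable-coloring step is what makes the argument go through, and the quantitatively delicate point is calibrating $N = \Theta((\Delta+1)\log n)$ so that the amplified per-variable success probabilities, multiplied across one independent color class, push $\Prob{\text{no \fd activated}}$ below $1/\mathrm{poly}(n)$ while the expected cost stays within the target factor $c(\Delta+1)\log n$.
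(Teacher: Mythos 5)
Your proposal follows essentially the same route as the paper's proof: per-target bad events, the conflict meta-graph on left-hand sides, the Hajnal--Szemer\'edi/KKMS equitable coloring to extract an independent color class carrying $z$-mass $\Omega\bigl(\tfrac{1}{\Delta+1}\bigr)$, $\Theta\bigl((\Delta+1)\log n\bigr)$-fold coin amplification, a union bound over the targets, and the identical linearity-of-expectation cost accounting against $OPT_{LP} \leq OPT_{IP}$. The only deviations are cosmetic: your explicit factor-$2$ handling of $y_t$ (the paper absorbs it via the WLOG self-loop \fds $i \to i$, so the class sums directly to $1$ and the threshold is $\tfrac{1}{\Delta+1}$ as in Algorithm~\ref{alg:randomizedrounding}), and your Markov-based upgrade of the expected cost to a high-probability bound, where the paper instead conditions on the good event $\cap\,\bar{\mathcal{B}_i}$ --- both constant-level differences absorbed into $c$.
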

 
\begin{proof} 
Define $\mathcal{B}_i$ to be the bad event that target element $i$ is not covered by Algorithm~\ref{alg:randomizedrounding}. Fix any target element $i$ and consider a meta-graph $G$ where each node represents the left-side of some \fd $LS \to i$ and two nodes $LS_j, LS_k$ are connected iff $LS_j \cap LS_k \neq \emptyset$. Recall,  $\Delta = \max_{LS} |\{ LS' : LS \cap LS' \neq \emptyset \} |$ and thus the maximum degree in $G$ is upper-bounded by $\Delta$. We invoke the equitable coloring theorem~\ref{lem:equi} on $G$ to obtain color classes $S_1,\ldots,S_{\Delta+1}$ of size (essentially) $ \frac{n}{\Delta+1}$. By grouping the terms $z_{LS}$ according to color classes we obtain 
$\sum_{j=1}^{\Delta+1} \sum\limits_{LS \in S_j} z_{LS} \geq 1.$  For at least one of the color classes $j$ the summation term $ \sum\limits_{LS \in S_j} z_{LS} \geq \frac{1}{\Delta+1}$. Let $j^\star$ be such an index. Observe that all the \fds within the $S_{j^\star}$ share no attributes since by the equitable coloring theorem they form an independent set in the meta-graph $G$. Thus, we obtain from the independence of the coin tossing 

\begin{align*}
 \Prob{i\text{~not activated}} &= \prod_{LS \in S_{j^\star} } \Bigg( 1-\prod_{k \in LS} y_k \Bigg) \leq \prod_{LS \in S_{j^\star}} e^{-\prod_{k \in LS} y_k} =\\ 
 &= \exp\Big(-\sum_{LS \in S_{j^\star}} \prod_{k \in LS} y_k  \Big)   \leq e^{-\frac{1}{\Delta+1}}.
\end{align*}

For each attribute $j$ we toss a biased coin with probability $y_j$ of success  $c (\Delta+1) \log n$ times independently where $c>1$ is a constant. If success occurs at least once, we include $j$ in our output.  The probability $p_j$ that an element $j$ is included in the output  satisfies $p_j=1-(1-y_j)^{c (\Delta+1) \log n} \leq c (\Delta+1)  \log n y_j$. It is straight-forward to show that using this procedure 

$$ \Prob{\mathcal{B}_i} \leq e^{-\frac{c \log n (\Delta+1)}{\Delta+1}} = \frac{1}{n^c}.$$

Furthermore, the expected cost of Algorithm's~\ref{alg:randomizedrounding} is upper bounded as follows: 

\begin{align*}
\Mean{\text{OUTPUT COST}} &\leq \sum_{i=1}^n p_i \leq  \sum_{i=1}^n(c  (\Delta+1) \log n) y_i= c  (\Delta+1) \cdot \log n \cdot OPT_{LP} \leq \\ 
&\leq (c  (\Delta+1) \log n) \cdot OPT_{IP}.
\end{align*}

Since $\Prob{\exists i\in T \text{~not activated~}} = \Prob{\cup_i \mathcal{B}_i}$, by a union bound we conclude that all target variables are activated with high probability for any constant $c>1$: 

\begin{align*}
\Prob{\cup_i \mathcal{B}_i} &\leq |T| 
\max_{i \in T} \Prob{i \text{~not activated}} \leq n \frac{1}{n^c} = o(1).
\end{align*}

Using the rule of conditional probability  and the fact that the  good event $ \cap \bar{\mathcal{B}_i}$ (i.e., all target variables are covered) holds {\it whp},  we obtain the desired result 

$$\Mean{\text{OUTPUT COST}|\bar{\mathcal{B}} } \leq O((\Delta+1)\log n) OPT_{IP}.$$
\end{proof}

We apply our randomized procedure for $D$ layers in the same manner as the deterministic algorithm, achieving an approximation guarantee of $\big(c(\Delta+1)\log n\big)^D$. The only distinction from the deterministic approximation algorithm is ensuring a success probability of $1-o(1)$. This requirement is easily met, as the failure probability for a single application of the \fd rules is $\frac{1}{n^c}$ for some sufficiently large constant $c$. By applying a union bound over $D$ rounds, since $D\leq n$, we achieve the desired result.

\spara{Remarks.} Solving the \tcand problem using an integer program (IP) can be a practical approach for small to medium-scale instances. In a query optimizer, where speed is crucial, this method can be effectively applied to smaller instances.  Between the two approximation algorithms we presented, the deterministic algorithm is more straight-forward to implement, as the randomized algorithm relies on the complex algorithm due to Kierstad et al.\cite{kierstead2010fast} for finding an equitable coloring on a meta-graph of the left-sides of \fds.  From an approximation guarantee perspective, the two approximation algorithms cannot be directly compared based due to different parameterizations. While the parameters clearly cannot take arbitrary values  (e.g., the number of \fds $|\mathcal{F}|$ is upper bounded by $n \cdot f$ given that each variable $i \in [n]$ participates in at most $f$ \fds and $\Delta \leq |\mathcal{F}|-1$), the values $f+1, (\Delta+1)\log n$ can be either larger or smaller depending on the specific instance.We present two extreme scenarios to illustrate this claim, though such situations are unlikely to occur in practice.  For example,  when $f=\Theta(|\mathcal{F}|)=O(n)$ (e.g., a constant number of variables are on the RS of $f=O(n)$ \fds per each), and $\Delta=O(1)$ (e.g. the left-sides are singleton sets), the value $(\Delta+1) \log n $ is less than  $f+1$ asymptotically. On other other extreme, there exist instances where $\Delta = \Theta(|F|)$ (e.g., there exists one common variable to all the left-sides of \fds) and $f$ is a low value making the value $f+1$ smaller than  $(\Delta+1) \log n $. In relatively large practical instances, these extreme scenarios do not occur. However, the parameters generally lean towards the deterministic algorithm side, as \(\Delta\) tends to be large due to the presence of a few common variables on the left side of most of the \fds, while $f\ll |\mathcal{F}|$. For these reasons, we find the deterministic algorithm to be more practical. However, since it requires solving an LP, there is still an open area for developing well-performing heuristics for systems.



\subsubsection{$1$-round \tcand is equivalent to\rbsetcover}
\label{sec:redblue}

We discover that the general case (regular \fds) for 1-round \tcand problem the problem is equivalent to a variant of 
the \setcover problem, referred to as \rbsetcover~\cite{goldwasser1997intractability, alekhnovich2001minimum,carr1999red}. Our discovery is important for two reasons: (i) from the equivalence reduction we obtain a new algorithm from \cite{chlamtavc2023approximating} and (ii) an inapproximability result, showing that polynomial time is very likely to restrict the approximation factor of the problem to $|\mathcal F|^{\frac{1}{4}}$ at best. It is important to note that the number of input \fds, denoted as \( |\mathcal{F}| \), can range from a constant to an exponential function of \( n \). Therefore, our bound does not directly provide a limit based solely on \( n \). Instead, it offers a new form of approximation guarantee. Our findings are summarized in the following theorem.

\begin{theorem} 
\label{thm:oneround}(Algorithm and Inapproximability of $1$-round \tcand)
The $1$-round \tcand problem admits a polynomial time $\widetilde{O}(|\mathcal F|^{\frac13})$-approximation algorithm and, additionally, does not admit a polynomial time algorithm with approximation ratio better than $\widetilde{O}(|\mathcal F|^{\frac14 - \epsilon})$ unless the Dense-vs-Random conjecture~\cite{chlamtavc2017minimizing} fails. The hardness results carries over to $D$-round \tcand for any $D \in [n]$.
\end{theorem}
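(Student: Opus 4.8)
The plan is to prove that $1$-round \tcand and \rbsetcover are \emph{equivalent} under polynomial-time, approximation-preserving reductions in both directions, and then to read off the algorithm from Theorem~\ref{thm:chlamtavc1} and the lower bound from Theorem~\ref{thm:chlamtavc2}. The whole theorem then follows once the two reductions are in place and their parameters are tracked.

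For the algorithmic direction I would reduce an arbitrary $1$-round \tcand instance to \rbsetcover. Introduce a red element $r_i$ for every variable $i\in V$ and a blue element $b_t$ for every target $t\in T$, and set $R=\{r_i:i\in V\}$, $B=\{b_t:t\in T\}$. For each distinct left-hand side $LS\in\mathcal{LS}$ create one set $\{r_i:i\in LS\}\cup\{b_t:(LS\to t)\in\calF\}$; note that the reflexive \fds $i\to i$ contribute the ``cheap'' singleton-style sets $\{r_t,b_t\}$. A family of sets covers all of $B$ if and only if the union $S$ of the corresponding left-hand sides derives every target in one \fd step, and the number of distinct red elements touched equals $|S|$; hence the two optima coincide and feasible solutions map to one another with the same value. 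Since there are at most $|\calF|$ sets, $n$ red elements and $|T|$ blue elements, applying the algorithm of Theorem~\ref{thm:chlamtavc1} yields an $O(|\calF|^{1/3}\log^{4/3} n\log|T|)=\widetilde{O}(|\calF|^{1/3})$ approximation.

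For the hardness direction I would reduce \rbsetcover to $1$-round \tcand. Given sets $S_1,\dots,S_m$ over $R\cup B$, create a variable $v_r$ for each red $r$ and a target variable $u_b$ for each blue $b$, with $T=\{u_b:b\in B\}$, and for each set $S_j$ and each $b\in S_j\cap B$ add the \fd $\{v_r:r\in S_j\cap R\}\to u_b$. Selecting reds corresponds to selecting the variables $\{v_r\}$, covering every blue corresponds to deriving every target in one round, and the costs match. The main obstacle is that closures are reflexive: in \tcand every target $u_b$ can be ``covered'' for free by adding $u_b$ itself, which has no \rbsetcover analogue and would collapse the instance by forcing $\mathrm{OPT}_{\tcand}\le|T|$ regardless of the red structure. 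I would neutralize this with a controlled blow-up, replacing each blue by $N:=n+1$ copies and extending every set accordingly, so that covering a blue reflexively costs $N$ while covering it through any genuine set costs at most $n$. Because target variables occur only on right-hand sides, any reflexive inclusion can be deleted, or replaced by the cheapest covering set, without increasing the cost; this lets me post-process any \tcand solution into a reflexivity-free one and extract an \rbsetcover solution of equal value, making the reduction approximation-preserving. The point to verify carefully is the parameter bookkeeping: the construction uses $|\calF|\le m\cdot|B|\cdot N=m^{1+o(1)}$ \fds whenever $|B|$ and $n$ are sub-polynomial in $m$, as in the instances underlying Theorem~\ref{thm:chlamtavc2}, so the $m^{1/4-\varepsilon}$ inapproximability of \rbsetcover translates to the claimed $\widetilde{O}(|\calF|^{1/4-\varepsilon})$ bound under the Dense-vs-Random conjecture.

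Finally, for the extension to $D$-round \tcand I would observe that the instance built above is ``flat'': target variables appear only on right-hand sides, so no \fd can fire in a second round and the closure saturates after a single step. Consequently $D$-round and $1$-round \tcand coincide on these instances for every $D\in[n]$, so any $D$-round approximation algorithm would in particular approximate $1$-round \tcand, and the lower bound carries over verbatim.
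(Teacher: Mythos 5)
Your overall route is the same as the paper's: a two-way polynomial-time reduction between $1$-round \tcand and \rbsetcover, followed by invoking Theorems~\ref{thm:chlamtavc1} and~\ref{thm:chlamtavc2}. Your forward reduction (red elements for variables, blue elements for targets, one set per left-hand side) matches the paper's construction $S_{LS} = LS \cup \{i^{(\mathrm{new})}\}$ up to grouping \fds by left side, and your saturation argument for the $D$-round extension --- blue variables occur only on right-hand sides, so the closure stabilizes after one round --- is exactly the right justification for the paper's one-line claim that the hardness carries over to all $D \in [n]$. Where you diverge is the hardness direction: you explicitly notice that a \tcand solution may simply pick target variables themselves (reflexivity of the closure), which has no \rbsetcover analogue, and you neutralize this by blowing each blue element up into $N = n+1$ copies. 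The paper's own proof does not do this; it asserts that any solution ``picks only variables corresponding to red points as the left hand side is of the form $S \setminus B$,'' which is not literally forced by the problem definition, so your gadget is a legitimate tightening of that step.

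However, your parameter bookkeeping has a genuine gap. Because you normalize to regular \fds (one \fd per pair of set and blue copy), your construction produces up to $m \cdot |B| \cdot N$ \fds, and your translation of the $m^{1/4-\epsilon}$ lower bound into $\widetilde{O}(|\calF|^{1/4-\epsilon})$ rests on the assumption that $|B|$ and $n$ are subpolynomial in $m$ in the hard instances underlying Theorem~\ref{thm:chlamtavc2}. You flag this yourself but do not verify it, and nothing in the statement of that theorem supports it --- the $\log^2 k$ factor rather suggests that $k = |B|$ may be polynomial in $m$, in which case $|\calF|$ is polynomially larger than $m$ and the exponent $\frac{1}{4}-\epsilon$ degrades. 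The paper avoids this issue entirely by creating a \emph{single} \fd per set, namely $S \setminus B \to S \cap B$, with a multi-attribute right-hand side, so that $|\calF| = m$ exactly and the hardness transfers with no assumptions on the instance parameters. Your blow-up gadget is compatible with this fix: one \fd per set whose right-hand side consists of all $N$ copies of its blue elements still makes reflexive coverage of a blue cost $N > n$, keeps $|\calF| = m$, and renders your reduction unconditional. With that one change your argument is complete, and on the reflexivity point it is more careful than the paper's.
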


We now prove Theorem~\ref{thm:oneround} by proving an equivalence between the $1$-round \tcand problem and\rbsetcover and then utilizing the recent progress in~\cite{chlamtavc2023approximating}. We adopt the same convention regarding the use of red/blue colors as described in the \rbsetcover problem in Section~\ref{sec:related}.

\begin{proof}
Given a collection of \fds $\mathcal{F}$ and a set $T$ we create an instance of \rbsetcover as follows. Firstly, for each element $i \in T$, we introduce a new variable $i^{(\text{new})}$. Our universe $U$ will be composed of these new variables in addition to the original ones.
The new variables will be colored blue whereas the old variables will be colored red. Now, for every \fd $LS \to i$ we create a set $S_{LS} := LS \cup \{i^{(new) }\}$ (note that due to the $1$-round scenario, we may discard each \fd of the form $LS \to i$ with $i \notin T$ so $i^{(new)}$ is well defined). We claim that any solution to \rbsetcover of size $k$ yields a solution to $1$-round \tcand of size $k$ and vice versa. Indeed, if we have a collection of sets $\mathcal{S}$ that cover all blue points and $k$ red points, we may as well pick the variables corresponding to those red points as our solution. By definition, for every $S_{LS} \in \mathcal{S}$ we have covered LS and so this means that all blue points are covered as there exists an \fd of the form $LS \to i$ $\forall i \in T$ and some totally covered LS, which correspond exactly to $T$ being inferred. Additionally, every solution to $1$-round \tcand of size $k$ can be mapped to a \rbsetcover solution by noting that every chosen variable corresponds to a red point and thus we may pick all the sets for which all their red variables are chosen in the solution. This covers all the blue points by the fact that all variables in $T$ can be inferred from the original instance and every blue point corresponds to a variable in $T$.
For the converse, consider an instance of \rbsetcover where $U \myeq R \cup B$ is the union of a red set $R$ and a blue set $B$. We can construct an instance of $1$-round \tcand as follows: For each set $S \myeq \{e_1,\ldots, e_t\}$, we create a functional dependency (\fd) $S \setminus B \rightarrow S \cap B$. We then set $T \myeq B$ and solve the $1$-round \tcand problem for this instance. Note that this process may generate \fds of the form $\emptyset \rightarrow V'$ for some non-empty set $V'$.

A solution to the constructed $1$-round \tcand instance of size $k$ means that we pick only variables corresponding to red points as the left hand side is of the form $S \setminus B$ and in particular at most $k$ of them. Let $R_{\mathrm{sol}}$ be those points and pick the sets $S$ for which $S \subseteq \cup R_{\mathrm{sol}} $. This means that the chosen sets do not cover points outside of $R_{\mathrm{sol}}$ and hence cover at most $k$ red points. Additionally, every blue point $b \in B$ is covered by the definition of the target $T$: the activated \fd $S \setminus B \rightarrow b$ means that we have picked the set S and $b \in B$ participates in an activated \fd. The other direction is analogous. Finally, note that the number of sets created in the above reductions equals the number of \fds, i.e. $|\mathcal{F}|$. Invoking Theorems~\ref{thm:chlamtavc1},~\ref{thm:chlamtavc2} by Chlamtavc et al.~\cite{chlamtavc2023approximating} yields the desired results.
\end{proof}

\section{Integrality Gaps} 
\label{sec:complexity}

\noindent We complement Theorem~\ref{thm:lpapprox} by showing   that the integrality gap of the problem of the LP relaxation is at least $\left(\frac{f-1}{2} \right)^{D-1}$ which is close to what we achieve. Specifically, we prove the following results. 

\begin{theorem} (Integrality gap for D-round-\tcand)
\label{thm:gapfull}
For $f\geq 3$, the integrality gap of the Linear Programming Formulation  for the $D$-round-\tcand problem is  at least
 
\[\left( \frac{n/(D+1)-1}{2}
\right)^{D} = \Omega\left( \frac n D \right)^{D}. \] 

To be more precise, it is at least $\left\lfloor\frac f 2 -1\right\rfloor^{D-1}$, where 
$f:= \mathrm{max}_{i \in V}|{ X\subseteq V: X \rightarrow i \in \mathcal F }|.$  whenever $f \geq 3$. 
The  result holds even when each \fd in $\mathcal F$ has a left hand side with cardinality at most $2$.   For $f=2$ the problem cannot be approximated within a factor of $2-\epsilon$ for any $\epsilon >0$ assuming the Unique Games Conjecture~\cite{khot2005unique}.
\end{theorem}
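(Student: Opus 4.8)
The plan is to prove the integrality-gap lower bound through an explicit family of layered instances, and to prove the $f=2$ inapproximability by a separate reduction from \textsc{Vertex Cover}. Throughout I would keep every left-hand side of size exactly $2$, so the ``holds even when each \fd has $|LS|\le 2$'' clause comes for free. For the construction I arrange the attributes into $D+1$ conceptual layers $L_0,L_1,\dots,L_D$, each of size $p$, with $L_D=T$ the target set, so $n=(D+1)p$. Fix bijections $\phi_i:L_i\to L_{i-1}$. For every $v\in L_i$ with $i\ge 1$ I create $f$ \fds of the form $\{\phi_i(v),w\}\to v$, where $w$ ranges over $f$ distinct ``partner'' attributes in $L_{i-1}\setminus\{\phi_i(v)\}$ (this needs $p\ge f+1$, and makes $\max_i f_i=f$). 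The crucial design feature is that \emph{all} $f$ \fds producing $v$ contain the single attribute $\phi_i(v)$.

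\textbf{LP upper bound.} I would exhibit the fractional point assigning each $v\in L_i$ the value $\mu_i:=f^{-(D-i)}$ from round $i$ onward (and $0$ before), and setting the activation variable of each used \fd into layer $i$ to $\mu_i/f$. Spreading mass $\mu_i/f$ over the $f$ \fds of $v$ yields $\sum z=\mu_i$, while each $z=\mu_i/f=\mu_{i-1}$ matches the value of its two endpoints, so every constraint of LP~\eqref{ipexact} is satisfied. Since only layer $L_0$ is charged by the objective $\sum_j x_j^{n-D}$, this gives $\mathrm{OPT}_{LP}\le p\,\mu_0=p f^{-D}$.

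\textbf{IP lower bound (the crux).} Because every \fd producing $v$ contains $\phi_i(v)$, an integral solution makes $v$ active only if $v$ itself is selected or $\phi_i(v)$ is active. Unwinding this shows that, for each target $t$, the selected set $X$ must contain some attribute of the $\phi$-chain $C_t=\{t,\ \phi_D(t),\ \phi_{D-1}(\phi_D(t)),\ \dots,\ \phi_1(\cdots\phi_D(t))\}$ obtained by iterating the maps down to $L_0$. Since the $\phi_i$ are bijections, the chains $C_t$ for distinct targets meet each layer in exactly one, distinct, attribute and are therefore pairwise disjoint; hitting all $p$ of them forces $|X|\ge p$, so $\mathrm{OPT}_{IP}\ge p$. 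Dividing yields an integrality gap of at least $p/(pf^{-D})=f^{D}$, and taking $p=f+1$ gives $f=n/(D+1)-1$ and gap $(n/(D+1)-1)^{D}=\Omega(n/D)^{D}$, which in particular is at least $\lfloor f/2-1\rfloor^{D-1}$ for $f\ge 3$. The only real obstacle is justifying that the activation variables cannot ``cheat'' by firing \fds out of layer order or by routing through a partner attribute: this is exactly where I use that \fds join only consecutive layers (so no derivation can skip a layer within $D$ rounds) and that a partner $w$ occurs solely on left-hand sides (so appearing as a partner never activates $w$), which makes the chain-hitting requirement genuinely necessary.

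\textbf{The case $f=2$.} Here I would reduce \textsc{Vertex Cover}: from a graph $G=(U,E)$, introduce one attribute per vertex and one target $t_e$ per edge $e=\{u,v\}$, together with the two simple \fds $u\to t_e$ and $v\to t_e$, so that $f=2$. A set of vertex-attributes infers all targets in a single round if and only if it is a vertex cover, hence the two optima coincide and the Khot--Regev $2-\varepsilon$ UGC-hardness of \textsc{Vertex Cover} transfers verbatim. I expect the forcing argument of the IP lower bound to be the most delicate part to write carefully, whereas the LP point and the $f=2$ reduction are routine.
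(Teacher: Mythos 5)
Your proof is correct, and it follows the same overall strategy as the paper's: an explicit layered instance in which fractional mass is amplified geometrically from the cheap bottom layer up to the target layer, together with a separate \textsc{Vertex Cover} reduction for the $f=2$ clause. The gadget itself, however, is genuinely different, and in one respect better. The paper takes layers of size $g$, partitions the ${g \choose 2}$ pairs of each layer among the $g$ variables of the next layer (so each variable is produced by about $(g-1)/2$ \fds with pairwise-varying left sides), obtains amplification $\lfloor (g-1)/2\rfloor$ per round, and then simply \emph{asserts} that the integral cost equals the layer size; making that assertion rigorous for the pair-partition construction is not immediate, since one must exclude integral solutions that mix selections across layers. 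Your shared-parent design --- all $f$ \fds producing $v$ contain the fixed attribute $\phi_i(v)$, with the parent maps bijective --- addresses exactly this: the implication ``$v$ active $\Rightarrow$ $v$ selected or $\phi_i(v)$ active'' unwinds into $p$ pairwise disjoint vertical chains that any integral solution must hit, so $\mathrm{OPT}_{IP}\geq p$ holds for any number of rounds and any mixture of layers, with no case analysis (and, as you note but slightly over-worry about, the chain implication is round-independent, so ``out-of-order firing'' is a non-issue). You also get amplification $f$ per round rather than $\lfloor (g-1)/2\rfloor$, so your gap $f^{D}=(n/(D+1)-1)^{D}$ strictly dominates both stated bounds $\left(\frac{n/(D+1)-1}{2}\right)^{D}$ and $\left\lfloor \frac f2-1\right\rfloor^{D-1}$. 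For $f=2$, your reduction (two simple \fds $u\to t_e$ and $v\to t_e$ per edge) instantiates the theorem's formal definition $f=\max_{i}|\{X: X\to i\in\mathcal F\}|$ literally, whereas the paper's single \fd $x_ux_v\to x_e$ has $f=2$ only under the ``left-hand-side size'' reading of $f$ --- an ambiguity in the paper itself, which your variant quietly resolves. One small step to state explicitly: a \tcand solution may select target attributes $t_e$ directly, so to equate the optima you should replace each selected $t_e$ by an endpoint of $e$ (the paper does include this swap); with that sentence added, the $2-\epsilon$ UGC-hardness transfers exactly as you claim.
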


 It is worth pointing out that 
 the integrality gap is not an artifact of our modelling of the problem, since as we showed in Theorem~\ref{thm:oneround}, there exists no polynomial time algorithm that can achieve an approximation ratio better than $|\mathcal F|^{\frac14 - \delta}$  with $\delta>0$   under the Dense-vs-Random Conjecture~\cite{chlamtavc2017minimizing} even for $D=1$. Before presenting the proof of Theorem~\ref{thm:gapfull}, we first establish Theorem~\ref{thm:gap1}, which is a simpler version of the theorem. This preliminary result helps to elucidate the fundamental concept driving the overall proof. In essence, our proof is analogous to the integrality gap of the standard LP relaxation for the vertex cover, which assigns each variable a value of $\frac{1}{2}$ in a clique of $n$ nodes. This approach results in an integrality gap of $\frac{n-1}{\frac{n}{2}} \to 2$ as $n \to \infty$~\cite{williamson2011design}.  
The adaptation to obtain the more general Theorem~\ref{thm:gapfull}  is straightforward afterwards.

\begin{theorem}
\label{thm:gap1}
The integrality gap of our LP formulation for the $D$-round \tcand problem is at least $2^{ D }$. 
\end{theorem}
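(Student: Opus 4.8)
The plan is to construct an explicit family of $D$-round \tcand instances on which the optimal integral solution is exponentially larger than the optimal fractional solution, mirroring the structure of the IP~\eqref{ipexact} layer by layer. The guiding intuition is the vertex-cover integrality gap mentioned in the excerpt: on a clique the LP can spread weight $\tfrac12$ over all vertices, whereas any integral cover must take almost all of them. I would replicate this ``spreading'' trick across $D$ layers so that each additional round of inference multiplies the gap by roughly $2$, yielding the claimed $2^D$.

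First I would fix $f=3$ (the smallest value for which the gap kicks in, consistent with Theorem~\ref{thm:gapfull}) and build a layered instance in which, for every variable $i$ and every layer $d$, there are exactly two \fds of the form $\{j,k\}\to i$ whose left-hand sides lie in layer $d-1$, each pair behaving like an edge of a clique-like gadget. For the \emph{fractional} solution I would assign the variable $x_i^{n-D}$ at the top layer a small uniform value and then propagate it downward through the LP constraints $x_i^d \le x_i^{d-1} + \sum_{LS\to i} z_{LS}^d$ together with $z_{LS}^d \le x_j^{d-1}$. Because each \fd has two left-hand variables, setting every $z$ to half of the downstream demand and distributing it over the two sources lets the fractional weight at each layer grow by only a constant factor while still satisfying $x_i^n=1$ for the targets; tracking this recursion shows the total fractional cost $\sum_j x_j^{n-D}$ stays bounded by a fixed constant (essentially $\mathrm{OPT}_{LP}=O(1)$).

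Next I would lower-bound the \emph{integral} optimum. The key combinatorial claim is that in an integral solution, activating a target at the bottom layer forces, at each of the $D$ rounds of backward inference, the selection of roughly twice as many variables as the previous round, because with $0/1$ values one cannot ``share'' the cost of a two-element left-hand side between two sources the way the LP does --- every activated \fd demands that \emph{both} its left variables be integrally set. Formalizing this as a backward induction on the layers $n, n-1,\ldots,n-D$ gives $\mathrm{OPT}_{IP}\ge 2^{D}\cdot c$ for the same constant $c$, and taking the ratio $\mathrm{OPT}_{IP}/\mathrm{OPT}_{LP}\ge 2^{D}$ establishes the theorem.

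The main obstacle I anticipate is making the branching gadget \emph{consistent across layers} without accidentally letting the LP exploit a shortcut that collapses the gap, or letting the integral solution reuse a single selected variable to satisfy many \fds simultaneously. Concretely, I must ensure the left-hand-side pairs are arranged (e.g.\ via disjoint or carefully overlapping index sets between consecutive layers) so that the factor-$2$ blow-up is genuinely forced integrally yet genuinely avoidable fractionally at every one of the $D$ steps; getting this bookkeeping exactly right --- and verifying that no \fd of the form $i\to i$ or any unintended implication short-circuits a layer --- is where the real care lies. Once the single-layer gap gadget is pinned down and shown to compose, the extension to the sharper $\lfloor f/2-1\rfloor^{D-1}$ bound of Theorem~\ref{thm:gapfull} should follow by widening each branching point from $2$ to $\lfloor f/2-1\rfloor$ alternatives, which is exactly the ``straightforward adaptation'' promised in the text.
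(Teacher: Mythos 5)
Your construction shell---constant-width layers, each variable inferred by two \fds whose two-element left-hand sides live in the previous layer, fractional values doubling from layer to layer---is the same as the paper's, and your fractional half is essentially right: the paper takes five variables per layer, partitions the ${5 \choose 2}=10$ pairs of layer $r$ into five groups of two (one group per variable of layer $r+1$), sets $T=V^{(D)}$, assigns every top-layer variable the value $c=2^{-D}$, and lets the values double down the layers until they reach $1$ at the targets, for a fractional cost of $5\cdot 2^{-D}$. The genuine gap in your proposal is the integral lower bound. You claim the integral optimum must double at each of the $D$ rounds (``$\mathrm{OPT}_{IP}\ge 2^{D}\cdot c$ for the same constant $c$,'' because both left-hand variables of an activated \fd must be set integrally). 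This can never be forced: in any layered instance the integral optimum is at most the width of a single layer, since selecting all of $V^{(0)}$ (or all of $V^{(D-1)}$) activates everything. With constant-width layers $\mathrm{OPT}_{IP}$ is therefore a constant---here $5$, provided the pair partition is chosen so that no four variables of a layer contain both designated pairs of all five successors (e.g.\ give successor $k$ the pairs $\{k,k+1\},\{k,k+2\}$ mod $5$, which share vertex $k$). The paper's gap is thus not ``constant LP versus exponential IP'' but ``exponentially small LP ($5\cdot2^{-D}$) versus constant IP ($5$)''; the doubling lives in the fractional values, not in the integral count.

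Moreover, your anticipated fix points in exactly the wrong direction. You propose arranging left-hand sides via ``disjoint\ldots index sets'' so that the factor-$2$ blow-up is ``genuinely forced integrally'' and no selected variable can be reused across \fds. But if the two alternative left-hand pairs of each variable are disjoint (a $4$-ary tree of distinct variables), the LP loses its advantage: to supply demand $v$ at a variable through $z_1+z_2\ge v$ and $z_i\le y_j$ for $j\in LS_i$, disjointness forces fractional cost at least $2z_1+2z_2\ge 2v$ on the pair variables, exactly the integral cost of committing to one \fd; propagating this down the tree, $\mathrm{OPT}_{LP}$ and $\mathrm{OPT}_{IP}$ both grow like $2^{D}$ and the ratio collapses to $O(1)$. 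Sharing left-hand-side variables inside a constant-width layer is precisely what lets the LP reuse the same $2^{-d}$-valued variables to half-satisfy every target simultaneously, while any $0/1$ solution must pay for whole pairs; preventing reuse destroys the gap rather than protecting it. To repair the proposal: keep your fractional solution, drop the doubling induction on the integral side, use the paper's clique-like shared layers, and prove directly (via the pair-partition property above) that the integral optimum equals the layer width.
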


\begin{proof} 
We create an instance of the $D$-round \tcand problem as follows. Consider the variable set $V := V^{(0)} \cup V^{(1)} \cup \ldots \cup V^{(D)}$ where $V^{(r)}:= \left\{ (i,r) \right\}_{i \in [5] }$. Note that we use tuples as variable names. We insert \fds in $\mathcal F$ from some pairs of variables in $V^{(r)}$ to variables in $V^{(r+1)}$ as follows for $r=0,\ldots,D-1$. For each variable $(k,r+1)$ in $V^{(r+1)}$, we create two \fds of the form $ (i,r), (j,r) \to (k,r+1) $. Note that $|V^{(r)} \times V^{(r)}| = {5\choose 2} = 10$ pairs of variables in layer $r$. These 10 pairs are then partitioned into 5 sets,
 each containing 2 pairs. Each set corresponds to the left-hand side of an \fd for a specific variable in the next layer, $V^{(r+1)}$. In other words, for each variable in $V^{(r+1)}$, there are two pairs of variables from $V^{(r)}$ that determine it through FDs. Finally, we set $T = V^{(D)}$ as the target set for this instance.   Notice that $T \subseteq (V^{(0)})^+$.


A feasible fractional solution to our LP formulation is to assign the fractional weight $c:=2^{-D+1}$ to every variable in $V^{(0)}$, i.e. set $x_{i,0}^{(0)} = 2^{-D+1}, i \in [5]$. One can verify that $z_{LS}^{(1)} := 1$ for every $LS \rightarrow (i,1)$ and hence we have that 
$x_{i,1}^{(1)}$ can be as large as $\sum_{z_{LS}^{(1)} \rightarrow (i,1)} c = 2c$ for all $i$ by the fact that every $(i,r)$ variable can be inferred from exactly two \fds. Inductively, we have that $x_{i,d}^{(d)}$ can be as large as $2^d c$ and hence in the last layer we have $x_{i,D}^{(D)} = 2^D \cdot c =1$, meaning that every element in the last layer, which is exactly our target set, satisfies the target constraint of our LP.  
 This shows that the total fractional cost is $5 \cdot 2^{-D}$ whereas the integral cost is  $5$, yielding the $2^{D}$ integrality gap. 
 \end{proof}

It is important to note that our proof employs an inductive approach. However, the underlying intuition becomes more apparent when we consider the process starting from the final layer $D$ and moving backwards. In the final layer, each variable is a target variable, so its corresponding LP variable is set to 1 because of the target constraints. Given that each variable can be inferred by two  \fds, an unfavorable yet feasible scenario for the LP would involve equally distributing the weight between these two \fds, and this pattern continues in preceding layers. The complete proof of Theorem~\ref{thm:gapfull} essentially extends this concept by adjusting the number of \fds associated with each variable.

\begin{proof}[Proof of Theorem~\ref{thm:gapfull}]
To obtain the general hardness on the integrality gap, we note that it suffices to set $V^{(r)} := 
\left\{ (i,r)\right\}_{i \in [g]}$ for some parameter $g$. Then we can force each variable in $V^{(0)}, V^{(1)}, \ldots, V^{(D)}$ to participate in ${ g \choose 2} / g \geq \left\lfloor\frac{g-1}{2}\right\rfloor$ \fds. Thus, the base of exponent in the fractional magnification per level is $g-1$ in contrast to $2$ and thus we may obtain both results by setting $g := n/(D+1)$ (for a total of $g \cdot D = n$ nodes) or $g = f = max_{i \in V}|{ X\subseteq V: X \rightarrow i \in \mathcal F }|$ for any  $f \geq 3$. 

For $f=2$, we observe that the problem is as hard as the Vertex Cover problem, so it cannot be approximated within a factor $2-\epsilon$ for any $\epsilon>0$ assuming the Unique Games Conjecture~\cite{khot2008vertex}. To prove the equivalence $G(V,E)$ we create a \emph{non-target} variable $x_v$ for each vertex $v \in V$ and a \emph{target} variable $x_e$ for each edge $e \in E$ adding the FD $x_ux_v \rightarrow x_e$ if $e:=(u,v)$. We first observe that any vertex cover in $G$ corresponds trivially to a set of variables which cover all target variables $\{x_e\}_{e \in E}$. For the other direction, every set $S$ of variables of size $\leq k$ from which $\{x_e\}_{e \in E}$ can be inferred we can create a set $\widetilde{S}$ of size $|\widetilde{S}| \leq |S| \leq k$ by substituting each $x_e \in S, e:=(u,v)$ with $x_u$ (if $x_u$ already in the set) and noting that $\widetilde{S}$ is a vertex cover for $G$.
\end{proof}

\noindent The above instances rule out any hope for designing efficient approximation algorithms for \tcand using linear programming based approaches.   

\section{Conclusion}
\label{sec:concl} 
In this work, we introduce the \tcand problem, a generalization of the well-known minimum candidate key problem~\cite{lipski1977two,lucchesi1978candidate}. The \tcand problem plays an important role in semantic query optimization. We demonstrate that \tcand, in its general form, is a layered set-cover problem, with each layer representing a stage of \fd inference using the given \fds. We formulate the \tcand as an integer program and explore its LP relaxation, both from the perspective of algorithm design and the analysis of integrality gaps. We also examine specific cases of the \tcand problem, such as scenarios where all \fds have at most one attribute on their left-hand side. In cases with one round of inference, the problem aligns with the \rbsetcover, a variant of    \setcover known for its inapproximability.   Our study opens a gateway to a host of compelling challenges, with the development of practical heuristics as a promising direction for future research.

\section{Acknowledgements} 

\noindent CET acknowledges Mihail Kolountzakis for early discussions on the layered set cover problem formulation.

\bibliographystyle{abbrv}
\bibliography{ref}
 
\end{document}